\pgfplotsset{width=10cm,compat=1.8} 
\tikzset{naming/.style={align=center,font=\footnotesize}}
\tikzset{area/.style = {draw, shape = regular polygon, regular polygon sides = 6, thick, minimum width = 5cm}}
\renewcommand{\epsilon}{\varepsilon}
\newcommand{\eps}{\varepsilon}
\renewcommand{\P}{\mathbb P}
\newcommand{\E}{\mathbb E}
\newcommand{\R}{\mathbb R}
\newcommand{\Q}{\mathbb Q}
\newcommand{\N}{\mathbb N}
\DeclareMathOperator*{\argmin}{argmin}
\DeclareMathOperator{\Span}{Span}
\DeclareMathOperator{\Proj}{Proj}
\DeclareMathOperator{\Tr}{Tr}
\DeclareMathOperator{\cX}{\mathcal{X}}
\DeclareMathOperator{\cY}{\mathcal{Y}}
\DeclareMathOperator{\one}{\textbf{1}}
\newtheorem*{theorem*}{Theorem}
\newtheorem{theorem}{Theorem}
\newtheorem{proposition}[theorem]{Proposition}
\newtheorem{remark}[theorem]{Remark}
\newtheorem{definition}[theorem]{Definition}
\newtheorem{corollary}[theorem]{Corollary}
\newtheorem{lemma}[theorem]{Lemma}
\newtheorem{example}[theorem]{Example}
\renewcommand{\leq}{\leqslant}
\renewcommand{\geq}{\geqslant}
\newcommand{\reals}{\mathbb{R}}
\newcommand{\ip}[1]{\langle #1 \rangle} 
\newcommand{\ones}{\mathbf 1}           
\newcommand{\half}{\frac{1}{2}}
\let\top\intercal 
\title{Towards Characterizing the First-order Query Complexity of Learning (Approximate) Nash Equilibria in Zero-sum Matrix Games}
\author{%
  Hedi Hadiji 
    \\
  Laboratoire des signaux et systèmes\\
  Univ. Paris-Saclay, CNRS, CentraleSupélec
  \And
  Sarah Sachs, Tim van Even \\
  Korteweg-de Vries Institute for Mathematics \\
  University of Amsterdam
  \And
  Wouter Koolen \\
  Centrum Wiskunde \& Informatica and University of Twente
}
\begin{document}

\maketitle

\begin{abstract}
  In the first-order query model for zero-sum $K\times K$ matrix games, players observe the expected pay-offs for all their possible actions under the randomized action played by their opponent. This classical model has received renewed interest after the discovery by \citeauthor{rakhlin2013optimization} that $\epsilon$-approximate Nash equilibria can be computed efficiently from $O(\ln K / \epsilon)$ instead of $O(\ln K / \epsilon^2)$ queries. Surprisingly, the optimal number of such queries, as a function of both $\epsilon$ and $K$, is not known. We make progress on this question on two fronts. First, we fully characterise the query complexity of learning exact equilibria ($\epsilon=0$), by showing that they require a number of queries that is linear in $K$, which means that it is essentially as hard as querying the whole matrix, which can also be done with $K$ queries. Second, for $\epsilon > 0$, the current query complexity upper bound stands at $O(\min(\ln(K) / \epsilon , K))$. We argue that, unfortunately, obtaining a matching lower bound is not possible with existing techniques: we prove that no lower bound can be derived by constructing hard matrices whose entries take values in a known countable set, because such matrices can be fully identified by a single query. This rules out, for instance, reducing to an optimization problem over the hypercube by encoding it as a binary payoff matrix. We then introduce a new technique for lower bounds, which allows us to obtain lower bounds of order $\tilde\Omega(\log(\frac{1}{K\epsilon}))$ for any $\epsilon \leq 1 / (cK^4)$, where $c$ is a constant independent of $K$. We further discuss possible future directions to improve on our techniques in order to close the gap with the upper bounds.
\end{abstract}

\section{Introduction}

Computing the saddle point
\begin{equation*}
  \min_{x \in \cX} \max_{y \in \cY} f(x, y) \, 
\end{equation*}
for convex-concave functions $f: \cX \times \cY \to \R$ is of general
interest throughout optimization, economics and machine learning. We
study the computation of an approximate saddle point $(x_\star,
y_\star)$, satisfying $\max_{y \in \cY} f(x_\star,y)  - \min_{x \in
\cX} f(x,y_\star) \leq 2\epsilon$ for some given $\eps \geq 0$.
Starting with an unknown $f$ from a known class $\mathcal F$, we
consider sequential learning in the first-order feedback model, where
the learner gets to observe gradients of the objective. Formally, each
query $(x, y)$ results in feedback $(\nabla_x f(x,y), \nabla_y f(x,y))$.
We are interested in the following question:
\begin{center}
 \emph{How many first-order queries are necessary and sufficient for a sequential learner to output an approximate saddle point for any $f \in \mathcal F$?}
\end{center}
Characterizing the query complexity of learning saddle points is of theoretical interest for understanding the hardness of computing equilibria, and for certifying the optimality of upper bounds.

In this work, we restrict our attention to the special case of
zero-sum matrix games, where $\cX$ and $\cY$ are finite-dimensional
probability simplices and $f$ is bilinear. For this canonical setting,
the optimal query complexity is, surprisingly, unresolved. Indeed, computation algorithms are known since
\cite{brown1951iterative-solut}, up to \cite{rakhlin2013optimization}, while lower bounds remain elusive, leaving the optimal query complexity still unknown.

Obtaining lower bounds is not only of fundamental interest in itself,
but may also lead to interesting new techniques, since none of the existing proof strategies are applicable. New ideas provided here could prove useful to tackle other problems.

\subsection{Contributions}
We make progress towards characterizing the first-order query complexity of approximate Nash equilibria in finite-action zero-sum games, as a function of the number of actions $K$ and approximation level $\eps$. Our contributions are the following:
\paragraph{Lower bounds} 
We provide the first lower bounds on the first-order query complexity
for zero-sum matrix games with bounded entries. We show that $K/2-1$
queries are necessary to compute an exact equilibrium
(Theorem~\ref{thm:exact_lower_bound} in Section~\ref{sec:lowerbounds}),
and at least $\Omega(\log( 1 / \eps K) / \log (K))$ queries are required
for an $\eps$-equilibrium if $\eps \leq 1 / K^4$
(Theorem~\ref{thm:approx-lower-bound} in Section~\ref{sec:lowerbounds}).
More than the concrete rates, we believe that the structure of our proof
is of particular interest, and might be of use beyond this specific
setting.
\paragraph{Upper Bounds}
We show that if the game matrix has entries in a known countable set, then the learner can recover the full matrix in one single first-order query (Theorem~\ref{thm:DoomedApproachCoutable} in Section~\ref{sec:upper-bounds}). This result can be interpreted both as an upper bound on the query complexity for matrices on countable entry sets, and as an impossibility result, precluding the use of simple predefined sets of matrices as candidate objectives for proving lower bounds. This, together with the lack of rotational invariance of the action sets, sheds light on why this setting turns out to be surprisingly resistant to proving lower bounds using well-established techniques.

\subsection{Related Work on Lower Bounds: Lower Bounds}\label{sec:related-work}

Below we review the literature on lower bound techniques for related problems and settings. We find that none of the existing techniques apply to our setting, for three particular reasons. They either:
\begin{itemize}[leftmargin=*]
  \item[--] Build matrices from a set of matrices with entries lying in a finite alphabet. In Theorem~\ref{thm:DoomedApproachCoutable}, we show that, for game
  matrices with entries in a known countable set, a single first-order
  query suffices to infer the exact Nash equilibrium. This shows in particular that any such matrix construction cannot lead to a lower bound in the first-order model.
  \item[--] Rely crucially on rotational invariance of the action set, which limits the approach to the $\ell_2$-constrained or unconstrained cases, ruling out the probability simplex.
  \item[--] Assume that the players select actions in the span of the observed gradients. This span assumption is not suited when the action set is not an $\ell_2$ ball, in which case the span of the gradients has no natural embedding into the action set; most algorithms leave the span, e.g.\ exponential weights.
\end{itemize}

\paragraph{First-order Query Complexity of Minimax Optimization}
Existing lower bounds \citep{IbrahimAGM20, Ouyang:2021aa} for minmax
optimization are known for finite-dimensional Euclidean spaces where
$\cX, \cY$ are either Euclidean balls or the whole space. Both these cases benefit from the property that the action sets are invariant under rotation, allowing for a step-by-step reduction to lower dimensional instances, as shown in the seminal work of \cite{nemirovsky1991on-optimality-o, nemirovsky1992information-bas}.
The notable exception of \cite{daskalakis2021the-complexity-} also does not cover the case we study: they consider the computational hardness of non-convex non-concave optimisation.

In the unconstrained case with curvature, that is, if $\mathcal F$ is the set of strongly convex, strongly concave functions with marginal condition numbers $\kappa_x$ and $\kappa_y$, the query complexity is settled to be $\smash{\Theta(\sqrt{\vphantom{h}\kappa_x \kappa_y} \log( 1/\eps) )}$; see \cite{IbrahimAGM20, zhang2022on-lower-iterat} for lower bounds and \cite{Lin2020NearOptimalAF} and references therein for upper bounds.

 For the constrained case, \cite{Ouyang:2021aa} provide lower bounds for bilinear saddle-point problems. They establish a query complexity of order $\Omega( L_f D_{\cX} D_{\cY} / \eps)$, where $D_{\cX}$  and $D_{\cY}$ are the respective diameters of the constraint sets $\cX$ and $\cY$, and $L_f$ measures the Lipschitz regularity of $f$. However, their techniques rely crucially on rotational invariance and sequentially adapted constraint sets, and they ask the open question of whether similar lower bounds can also be shown for fixed constraint sets.

\paragraph{Other Query Models}
The query complexity of minimax optimization has also been studied under different feedback models. Recall that we focus on the bilinear case, where $f: p, q\mapsto  p^\top M q$ for some game matrix~$M$. \cite{fearnley2015learning-equili} study the query complexity of (approximate) Nash equilibria of the game, i.e., (approximate) saddle points of $f$, under a query model where the learner chooses single entries of the matrix to observe. They show, among other results, that in order to compute an $\eps$-equilibrium in $K\times K$ zero-sum games, the number of entries queried needs to be at least $\Omega (K \log K)$ when $\eps = \mathcal O ( 1 / \log K)$.
\cite{hazan2016the-computation} consider a setting in which the learner
observes a best reponse to their query $(p,q)$, i.e., $i^\star \in
\argmin (M q)_i$ and $j^\star \in \argmin -(M^\top p)_j$. In that
setting, they show that to compute a $1/4$-equilibrium, at least
$\smash{\Omega(\sqrt K / (\log K)^2)}$ best-response queries are necessary.
Since a first-order query brings strictly more information than either a
best-response query or an entry query, these lower bounds do not have
direct consequences for our query model. 
Both of
these references prove their lower bounds by building hard matrices with
entries in a known finite set (e.g., $\{0, 1\}$).

\paragraph{Other Lower Bounds in Saddle Point and Equilibria Computation}
For multiplayer games, strong lower bounds on the computational complexity of exact Nash equilibria have been uncovered:  PPAD-hardness for computing the Nash-equilibrium in a general game, see \cite{ DBLP:journals/eccc/ChenDT06, 10.1145/1516512.1516516, article}, or in a non-convex-concave zero-sum game \cite{daskalakis2021the-complexity-}. Regarding query complexity, \cite{babichenko2016query-complexit, hart2018query} study the single-entry query complexity for approximate correlated equilibria and Nash equilibria of games with many players. As in other references mentioned, all these bounds are built on matrices with entries in a finite set.
Due to their intrinsically combinatorial nature, these techniques are less common for numerical algorithms. One of the classical methods for sample complexity lower bounds  was introduced by \cite{Nesterov2014IntroductoryLO} for first-order optimization. It was successfully extended to saddle-point problems \cite{zhang2022on-lower-iterat}  and recently also to unconstrained zero-sum games by \cite{IbrahimAGM20}. However, these techniques rely on the assumption that the next iterate is chosen in the span of the previous oracle information (cf.\ Definition~1 in \citep{IbrahimAGM20} or Assumption~2.1.4 in \citep{Nesterov2014IntroductoryLO}). Moreover, the game constructed by \cite{IbrahimAGM20} for a Nesterov-style lower bound results in matrices with entries from a finite alphabet.
For the same reason, no proof technique inspired by lower bounds via Rademacher random variables, as in \cite{ORABONA201850}, can ever work.

\subsubsection{Other Related Work}

\paragraph{Upper Bounds for Finite Action Zero-Sum Games}
Early on, \cite{brown1951iterative-solut, robinson1951an-iterative-me}
show that under best-response dynamics, the average plays converge to an
equilibrium. The connection to regret bounds was established by
\cite{freund1999adaptive}, who deduce a $\mathcal O(\log(K) /\eps^2)$ query complexity, by means of a construction akin to online-to-batch conversion.
More recently, \cite{daskalakis2011near-optimal-no,
daskalakis2015near-optimal-no} obtained the first fast rates of order
$\mathcal O ( c(K) / \eps)$, through an ingenious learning mechanism.
\cite{rakhlin2013optimization} rediscovered the Optimistic Online Mirror
Descent algorithm \citep{popov1980a-modification-}, and pioneered
optimistic online learning regret bounds; they yield to this the day the
fastest rates of $\mathcal O ( \log(K) / \eps)$ and the simplest algorithm.
The instance-dependent linear-rate upper bounds of order $\lambda(M) \ln
(1/\epsilon)$ by \cite{10.5555/1619995.1620009, Wei2020LinearLC} are
essentially incomparable to the worst-case $\mathcal O ( \log(K) / \eps)$; they are superior when $\lambda(M) \ll K$, and vacuous in the typical case $\lambda(M) \approx K$ (see the discussion below Theorem~\ref{thm:upperbd}).

Our focus is the quality of the inferred saddle point (this optimization perspective is sometimes called \emph{pure exploration}). Part of the literature regards the queries as actual moves made, and consequently prioritizes other objectives. In particular, many focus on studying \emph{uncoupled dynamics}, that is, sequences of actions that separate the observations of the individual $p$ and $q$ players without allowing communication between the players. Another recent theme of interest is the last-iterate convergence of such dynamics, see e.g., \cite{Hsieh2021AdaptiveLI}.
\paragraph{Upper Bounds for Minmax Optimization}

The vast majority of the known upper bounds in minimax optimization (and
more generally in variational inequality problems), outside of
finite-action zero-sum games mentioned above, concern unconstrained
settings \citep{azizian2020tight, golowich2020iterate, mokhtari2020convergence-rat}.
The literature on constrained settings is more sparse: \cite{cai2022tight} recently proved the rate of convergence of the projected extra-gradient method \citep{korpelevich1976extragradient}. \cite{yang2022solving} adapt interior methods to handle constraints. \cite{gidel2017frank} propose a fast algorithm when either the action sets are strongly convex, or the objective is strongly-convex strongly-concave. 
Recently, non-convex-concave settings have also attracted attention, see \cite{Lin2020NearOptimalAF} and references therein. 

\paragraph{Upper Bounds for General-Sum Games and Correlated Equilibria}
For multiplayer games, uncoupled dynamics do not converge (in any sense) to Nash equilibria. However, regret-based procedures can find correlated and coarse correlated equilibria \citep{Cesa-Bianchi:2006uo, stoltz2007learning-correl, piliouras2022beyond}. In particular, internal regret guarantees for individual players in general-sum multiplayer games were recently improved from the generic $\mathcal O(1 / \eps^2 )$ for any sequences of losses, to $O( 1 / \eps^{4/3})$ in \citep{syrgkanis2015fast}, to $O(1 / \eps^{6 / 5})$ in \citep{Chen2020HedgingIG} and to $O(\log( 1/\eps) / \eps )$ in \citep{daskalakis2021near-optimal-no, farina2022near-optimal-no} (we omit the dependence on the number of actions and players).
Polynomial-time methods for efficient computation of exact correlated equilibria are designed by \cite{jiang2011polynomial-time, papadimitriou2008computing-corre}, using their formulation as solutions to a linear program.

\section{Setting and Notation}
\paragraph{General Notation}
Given a set of numbers $A \subset \R$, we denote by $\mathcal M_K(A)$ the set of $K \times K $ matrices with entries in $A$; we mainly consider the bounded-entries class $\mathcal M_K([-1, 1])$.
A $K \times K$ zero-sum game between a minimizing $p$-player (or row player) and a maximizing $q$-player (or column player) is represented by a matrix $M \in \mathcal M_K(\R)$; we restrict our attention to square $K \times K$ games.
For any pair of plays $(p, q)$, the expected loss vector of the $p$-player (resp.\ $q$-player) is $Mq \in \R^K$ (resp.\ $-M^\top p \in \R^K$). The suboptimality gap at $(p, q)$ is 
\begin{equation*}
  g(M, p, q) = \max_{j \in [K]} (M^\top p)_j  - \min_{i \in [K]} (M q)_i \, . 
\end{equation*}
The gap $g(M, p, q)$ is non-negative for any $M, p$ and $q$. The pair of plays $(p, q)$ is said to be an $\eps$-Nash equilibrium if $g(M, p, q) \leq 2\eps$; Nash equilibria are $0$-Nash equilibria. Most games we will consider will possess a unique Nash equilibrium $p, q$, and this equilibrium will be fully supported, i.e., all components of $p$ and $q$ are positive. Fully supported equilibria in finite games are also equalizing strategies, meaning that the loss vectors are then isotropic; in finite zero-sum games, this entails that $M^\top p = Mq = v \mathbf 1$, where $v$ is the game-value.
We follow the notation convention to abbreviate $\min(x,v) = x\wedge v$ and to hide polylogarithmic terms by $\tilde \Omega$ or $\tilde O$. 

\paragraph{First-Order Query Model and Objectives}
The first-order query model is an interaction protocol between a learner and a game matrix, defined as follows. Fix a time horizon $T \in \mathbb N$, and a set of candidate matrices $\mathcal M \subseteq \mathcal M_K(\R)$. Over the course of $T$ rounds, the learner sequentially picks (queries) pairs of plays $(p_t, q_t)$ and observes the expected losses $(Mq_t, -M^\top p_t)$, where the matrix $M \in \mathcal M$ is fixed and unknown to the learner.

We examine the number of interactions necessary to compute an approximate equilibrium. At the end of the $T$ rounds, the learner recommends a pair $(p, q)$. Given a fixed gap level $\eps \geq 0$ and a set of candidate matrices $\mathcal M$, we say a strategy achieves query complexity $T(\eps; \mathcal M)$ if for any matrix $M \in \mathcal M$, the strategy outputs a pair $p, q$ such that $g(M, p, q) \leq 2\eps$ when $T \geq T(\eps; \mathcal M)$. (The dependence on $\mathcal M$ is omitted when clear from context.) We also occasionally refer to the query complexity of recovering the full matrix, which is a different task but in the same query model. In this case, the learner recommends a full matrix and achieves query complexity $T$ if the matrix guess is correct for any true matrix $M \in \mathcal M$.

The topic of this paper is the study of the optimal query complexity of finding $\eps$-Nash equilibria in the first-order model, for the set of matrices with bounded entries $\mathcal M_K([-1, 1])$.
 
\begin{figure}[h]
  \centering

  \begin{tikzpicture}[discontinuity/.style={decoration={show path construction,lineto code={%
    \path (\tikzinputsegmentfirst) -- (\tikzinputsegmentlast) coordinate[pos=.89] (mid);%
    \draw[ultra thick] (\tikzinputsegmentfirst) -- ([yshift=-6pt]mid) -- ++(-3pt,3pt) -- ++(6pt,3pt) -- ++(-3pt,3pt) -- (\tikzinputsegmentlast);%
  }}}]
\pgfplotsset{
              compat=1.12
}
\pgfplotsset{scale = 0.9}
\def \ymax {20}
\def \c {4}
 \def \upperBoundDot{(0.29957,20)(0.30877,19.3056)(0.31796,18.6513)(0.32715,18.0338)(0.33634,17.45)}
 \def \upperBound {(0.33634,17.45)(0.34554,16.8973)(0.35473,16.3733)(0.36392,15.8757)(0.37311,15.4026)(0.3823,14.9523)(0.3915,14.5232)(0.40069,14.1137)(0.40988,13.7226)(0.41907,13.3486)(0.42827,12.9907)(0.43746,12.6479)(0.44665,12.3192)(0.45584,12.0037)(0.46504,11.7007)(0.47423,11.4094)(0.48342,11.1292)(0.49261,10.8595)(0.50181,10.5996)}
\def \lowerBound{(0.000125,17)(0.0026735,9.4624)(0.0052219,7.8148)(0.0077704,6.8366)(0.010319,6.1386)(0.012867,5.5954)(0.015416,5.1507)(0.017964,4.7742)(0.020513,4.4477)(0.023061,4.1595)(0.02561,3.9015)(0.028158,3.6681)(0.030707,3.4548)(0.033255,3.2586)(0.035804,3.0769)(0.038352,2.9077)(0.040901,2.7494)(0.043449,2.6006)(0.045997,2.4603)(0.048546,2.3276)(0.051094,2.2017)(0.053643,2.0819)(0.056191,1.9677)(0.05874,1.8585)(0.061288,1.754)(0.063837,1.6538)(0.066385,1.5574)(0.068934,1.4647)(0.071482,1.3754)(0.074031,1.2892)(0.076579,1.2059)(0.079128,1.1253)(0.081676,1.0473)(0.084224,1)}
\newcommand{\myxlist}{0,1,2,\c * 1.3, \ymax -3, \ymax }
\begin{groupplot}[
group style={
    group name=my fancy plots,
    group size=2 by 2,
    xticklabels at=edge bottom,
    horizontal sep=0pt
}    ]

\nextgroupplot[
ymin=0, ymax={\ymax+1},
ytick/.expanded={ \myxlist },
yticklabels={0, 1, 2, $c \log K$,$\frac{K}{2}-1$,$K$},
     axis line style = {ultra thick, line cap=rect},
label style={font=\small},
tick label style={font=\small}, 
height=8cm,
  xlabel=$\epsilon$,
    ylabel={Queries $T(\epsilon) $},
xmin=0,xmax=0.63,
xtick={0,0.084,0.29957,0.5},
 xticklabels={0, $c/K^4$,$c \frac{\log K}{K}$,$\frac{1}{2}$},
axis x line*=box,
axis y line*=left,
every outer y axis line/.append style={discontinuity},
width=11cm]

   \addplot [color=red, smooth, ultra thick] plot coordinates \lowerBound;
   
     \addplot[smooth,color=red,ultra thick,dotted]
    plot coordinates {
        (0.6,1)
        (0.95,1)
    };
    
           \addplot[color=red!80!black!20,ultra thick]
    plot coordinates {
          (0.084,1)
           (0.62,1)                    };
              \addplot[color=red,ultra thick,dotted]
    plot coordinates {
          (0.084,1)
           (0.62,1)                    };
           
           \draw (0.025,5.2) -- ++ (70:0.4) node[above, text width=width("by Theorem 17")] {$\tilde\Omega\left(\log \frac{1}{K\epsilon}\right)$ by Theorem~\ref{thm:approx-lower-bound}};
            \draw  (0.41,13.7226) -- ++ (75:0.4) node[above,text width=width("Sridharan (2013) ")] {$O\left( \frac{\log K}{\epsilon}\right)$ \citet{rakhlin2013optimization}};
\draw  (0.6, 2) -- ++ (160:0.2) node[above,text width=width(" Daskalakis (2009) ")] {$T(\epsilon) = 1$  \citet{daskalakis2009note}};
    
      \addplot[color=blue!80!black!20,ultra thick, line cap=round]
    plot coordinates {
           (0.29957,20)(0,20)                   };

         \addplot[color=blue,ultra thick,dotted]
    plot coordinates {
           (0.29957,20)(0,20)                   };

    \addplot [color=blue, smooth, ultra thick, line cap=round] plot coordinates \upperBound;
     \addplot[color=blue,ultra thick]
    plot coordinates {
          (0.50181,10.5996)
          (0.50181,2)
         (0.62,2)
                };
                      \addplot[smooth,color=blue,ultra thick,dotted]
    plot coordinates {
        (0.6,2)
        (0.95,2)
    };
      \addplot [smooth,color=blue,ultra thick,dotted] plot coordinates \upperBoundDot;
 
\nextgroupplot[
ymin=0, ymax={\ymax+1},
ytick/.expanded={ \myxlist },
yticklabels={0, 1, 2, $c \log K$,$\frac{K}{2}-1$,$K$},
ytickmax=\ymax,
axis line style = {ultra thick, line cap=rect},
label style={font=\small},
tick label style={font=\small}, 
height=8cm,
xmin=0.8,xmax=1,
xtick={0.95},
xticklabels={$1-\frac{1}{K}$},
axis y line*=right,
every outer y axis line/.append style={discontinuity},
axis x line=box,
axis x discontinuity=crunch,
width=4cm
]
   \addplot[smooth,color=blue,ultra thick,dotted]
    plot coordinates {
        (0.7,2)
        (0.9,2)
    };

\addplot[color=blue,ultra thick]
    plot coordinates {
        (0.87,2)
        (0.95,2)(0.95,0.1)(1,0.1)
    };
    
       \addplot[smooth,color=red,ultra thick,dotted]
    plot coordinates {
        (0.7,1)
        (0.9,1)
    };
    \addplot[color=red!80!black!20,ultra thick]
    plot coordinates {
        (0.87,1)
        (0.95,1) 
    };

\addplot[color=red,ultra thick,dotted]
    plot coordinates {
        (0.87,1)
        (0.95,1)(0.95,0.1)(1,0.1)
    };

     \draw (0.96, 0.4) -- (0.96,7) node[above, xshift=-.5cm, text width=width("Prop. 2  xx")] {$T(\epsilon) = 0$ Prop.~\ref{prop:zeroQuery}};

        \end{groupplot}
\end{tikzpicture}
  \caption{\footnotesize Upper (blue) and lower (red) bounds on the first-order query complexity of computing an $\eps$-equilibrium for $K \times K$ matrix games with entries in $[-1, 1]$. See Sections~\ref{sec:upper-bounds} and~\ref{sec:lowerbounds} for details.}
  \label{fig:query-comp-graph}
\end{figure}
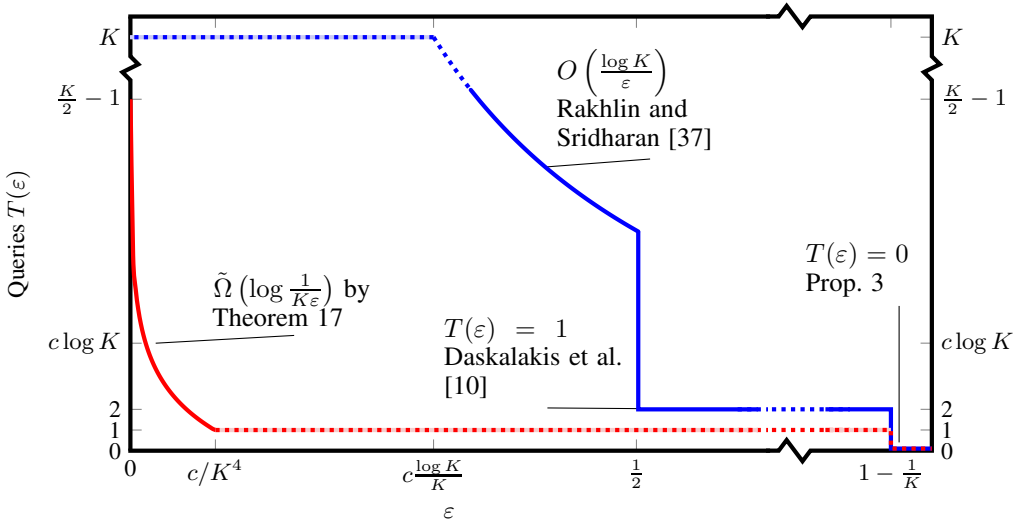

\section{Upper Bounds}\label{sec:upper-bounds}
In this section we collect upper bounds for the first-order query complexity of approximate Nash equilibria. These bounds are either well-established, trivial or exploit in a miraculously striking fashion the difference between learning saddle points for matrix games taking entries in a countable (e.g.,  $\mathbb Q$) or uncountable (e.g., $[-1, 1]$) sets of values.
\subsection{Query Complexity over $\mathcal M_K([-1, 1])$: Regret and Elementary Strategies}
We list some well-known results from the literature and state them in terms of first-order query complexity. The results described here are compiled in Figure~\ref{fig:query-comp-graph}. We use the notation $u_{n:m}$ to denote the family $(u_k)_{n \leq k \leq m}$.

\paragraph{$O(\epsilon^{-1} \log K)$ Queries from Optimistic Online Learning}
The current best upper bounds on the query complexity of $\eps$-equilibria for zero-sum games are derived from online learning methods. As is well-known \citep[Chapter 7]{Cesa-Bianchi:2006uo}, if we denote by $\widehat p_T$ (resp.\ $\widehat q_T$) the average of $p_{1:T}$ (resp.\ $q_{1:T}$) then 
\begin{multline*}
  T g(M, \widehat p_T, \widehat q_T) 
  = T \max_{j \in [K]} (M^\top \widehat p_T)_j  - T \min_{i \in [K]} (M \widehat q_T)_i  \\
  = \sum_{t = 1}^T \langle p_t,  Mq_t\rangle - T \min_{i \in [K]} (M \widehat q_T)_i 
  + \sum_{t = 1}^T \langle q_t, - M^\top p_t \rangle - T \min_{i \in [K]} (-M^\top \widehat p_T)_j \\
  \leq \sum_{t = 1}^T \langle p_t,  Mq_t\rangle - \sum_{t = 1}^T \min_{i \in [K]} (M  q_t)_i 
  + \sum_{t = 1}^T \langle q_t, - M^\top p_t \rangle - \sum_{t=1}^T \min_{i \in [K]} (-M^\top  p_t)_j
  \, . 
\end{multline*}
This last term is exactly the sum of the regrets suffered by each player on their respective losses: the gap of the average plays is smaller than the sum of the average regrets over $T$ rounds. This relationship between regret and gap provides a fruitful way to upper bound the query complexity of computing equilibiria. Specifically, \cite{rakhlin2013optimization} observed that if players follow the Optimistic Mirror Descent strategy then the sum of the regrets stays smaller than $\mathcal O (\log K)$; see the paragraph following Proposition~6 in the mentioned reference. 
\begin{theorem}[Consequence of \citet{rakhlin2013optimization}]\label{thm:upperbd}
  There exists an absolute constant $c>0$ such that the first-order query complexity over $\mathcal M_K([-1, 1])$ is
  \begin{equation*}
    T(\eps) \leq \Big(c \frac{\log K }{\eps} \Big) \wedge K \, .
  \end{equation*}
\end{theorem}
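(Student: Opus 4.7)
The plan is to prove the two branches of the minimum separately, both of which are essentially established by combining an already-stated ingredient with one external black-box result.

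For the $O(\log K / \epsilon)$ branch, I would build directly on the regret-to-gap reduction derived in the multiline display just above the statement, which for any querying strategy gives
\[ T \, g(M, \widehat p_T, \widehat q_T) \;\leq\; R_p(T) + R_q(T), \]
where $R_p(T)$ and $R_q(T)$ denote the individual regrets of the $p$- and $q$-player against their observed loss sequences $(Mq_t)_{t\leq T}$ and $(-M^\top p_t)_{t\leq T}$. The remaining task is to exhibit a strategy that bounds the \emph{sum} of the two regrets by $O(\log K)$ with no dependence on $T$. For this I would have both players run Optimistic Mirror Descent on the simplex with the negative-entropy regularizer (i.e.\ optimistic Hedge). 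The key observation in \cite{rakhlin2013optimization} (paragraph following Proposition~6 there) is that the loss vector each player sees changes from round to round only because the \emph{other} player updates their iterate, and for symmetric optimistic Hedge this coupling, together with $\|Mq\|_\infty \leq 1$ and $\|M^\top p\|_\infty \leq 1$ on $\mathcal M_K([-1,1])$, yields $R_p(T) + R_q(T) \leq c \log K$ for an absolute constant $c$. Dividing by $T$ and choosing $T = \lceil c \log K / \epsilon \rceil$ produces $g \leq 2\epsilon$ after absorbing the factor $2$ into $c$.

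For the $K$ branch, the argument is elementary matrix recovery. I would have the learner play $(p_t, q_t) = (e_1, e_t)$ for $t = 1, \dots, K$, where $e_t$ denotes the $t$-th vertex of the simplex. The first component of the feedback at round $t$ is $Mq_t = M e_t$, i.e.\ the $t$-th column of $M$, so after $K$ queries the entire matrix $M$ is known. The learner then computes any exact Nash equilibrium $(p_\star, q_\star)$ of the known game $M$ by linear programming and outputs it, achieving $g(M, p_\star, q_\star) = 0 \leq 2\epsilon$.

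The only non-routine step is the invocation of the Rakhlin--Sridharan regret bound. I would check that under the normalization $M \in \mathcal M_K([-1,1])$ the path-length/stability term introduced by optimism genuinely telescopes to a $T$-independent $O(\log K)$ when \emph{both} players use symmetric optimistic Hedge; this is the crux of the above-cited discussion, and no other subtlety arises. The regret-to-gap reduction is already written in the excerpt, and the column-recovery step for the $K$-bound is immediate from the definition of the first-order feedback.
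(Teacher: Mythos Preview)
Your proposal is correct and matches the paper's approach: the paper does not give a formal proof but sketches exactly this argument, combining the regret-to-gap reduction displayed just above the statement with the Rakhlin--Sridharan $O(\log K)$ bound on the sum of optimistic-Hedge regrets for the first branch, and matrix recovery via basis-vector queries (the paper uses $p_t=q_t=e_t$, your $(e_1,e_t)$ works equally well since $Mq_t$ already reveals the $t$-th column) for the $K$ branch.
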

A wide stream of literature leverages the connection between regret and equilibria,  and study the dynamics of pairs of learning algorithms, including the two principal flavors of optimistic online algorithms, Optimistic Mirror Descent and Optimistic Follow-the-Regularized-Leader, cf.\ Section~\ref{sec:related-work}.

\paragraph{Instance-Dependent Rates of Convergence}
Many existing bounds show rates of convergence exponential in $T$, with instance-dependent constants\footnote{More precisely: a parameter similar to a condition number of the game matrix}, see, e.g., \cite{10.5555/1619995.1620009,Wei2020LinearLC}. However, we show in Example~\ref{CdeltaLarge} that for a zero-sum two-player bilinear game, these constants can be large, even on simple game matrices. If the constants are too large, the consequences on query complexity are vacuous, as they might require more than $K$ queries to obtain non-trivial ($\eps < 1$) equilibria. Although these results give a very interesting analysis beyond the worst-case, they are beyond the scope of this work.

\paragraph{Upper Bounds for Large $\eps$} 
We conclude this section with two elementary strategies that provide equilibria with large approximation values of $\eps\geq  1/ 2$. 
\begin{theorem}[Theorem. 3.1 in \cite{daskalakis2009note}]
  For any $\eps \geq 1/2$, the query complexity over $\mathcal M_K([-1, 1])$ of finding an $\eps$-equilibrium  is $T(\eps) \leq 2$. 
\end{theorem}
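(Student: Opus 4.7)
The plan is to apply the two-query construction of \cite{daskalakis2009note}, adapted to the first-order query model. The key observation is that a first-order query at a pair of pure strategies $(e_i, e_j)$ returns the entire $j$-th column and negated $i$-th row of $M$, via $Me_j = M_{\cdot, j}$ and $-M^\top e_i = -M_{i, \cdot}$. Two such queries therefore suffice to observe two rows and two columns of $M$ chosen adaptively, which is enough information to identify a pair of pure best responses.

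Concretely, I would first query at $(e_1, e_1)$ to reveal the first row $M_{1, \cdot}$ and the first column $M_{\cdot, 1}$, and then set $j^* \in \argmax_j M_{1, j}$, the column player's best pure response to $e_1$. Next, I would query at $(e_1, e_{j^*})$ to reveal column $M_{\cdot, j^*}$, and set $i^* \in \argmin_i M_{i, j^*}$, the row player's best pure response to $e_{j^*}$. The final recommendation is $\hat p = \tfrac12(e_1 + e_{i^*})$ and $\hat q = e_{j^*}$. By construction, $\min_i (M\hat q)_i = M_{i^*, j^*}$; and for any $j$, $(M^\top \hat p)_j = \tfrac12(M_{1, j} + M_{i^*, j}) \leq \tfrac12(M_{1, j^*} + \max_{j'} M_{i^*, j'})$ by the defining property of $j^*$. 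Plugging these into the definition of $g(M, \hat p, \hat q)$ and using the entrywise constraint $M \in [-1, 1]$ yields a constant upper bound, and the refined DMP-style argument tightens it to $g(M, \hat p, \hat q) \leq 1$, i.e., an $\eps$-Nash equilibrium for $\eps = 1/2$. The conclusion for all $\eps \geq 1/2$ then follows, since any $\eps'$-NE is also an $\eps$-NE for $\eps \geq \eps'$.

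The main obstacle is matching the exact constant $1/2$: a crude application of the $[-1, 1]$ constraint only yields $g \leq 2$, which corresponds to the trivial $\eps = 1$ guarantee already achievable with zero queries on $\mathcal M_K([-1, 1])$. Sharpening this to $g \leq 1$ relies on the specific DMP-style combinatorics of best responses from the cited reference, possibly with a minor refinement of the construction above that also exploits the first column $M_{\cdot, 1}$ observed in the first query (used, for instance, to pick the better of two candidate output pairs, or to break ties in $\argmax_j M_{1, j}$ adversarially).
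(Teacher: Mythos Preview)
The paper gives no proof of its own here; it simply cites Theorem~3.1 of \cite{daskalakis2009note}. Your proposal is exactly the natural unpacking of that citation in the first-order model, and your observation that a query at $(e_i,e_j)$ reveals row~$i$ and column~$j$ of $M$ is the right way to implement the DMP construction with two first-order queries.

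However, the obstacle you flag in your last paragraph is not a cosmetic refinement but a genuine gap that the DMP argument does not close over $\mathcal M_K([-1,1])$. The $1/2$-approximation in \cite{daskalakis2009note} is proved for payoffs normalised to $[0,1]$: each player's regret is bounded by one half of the payoff range. Rescaling affinely to $[-1,1]$ doubles the range and hence both regrets, so the identical construction only guarantees $g\leq 2$ (equivalently $\eps\leq 1$) on $\mathcal M_K([-1,1])$, and there is no additional ``DMP combinatorics'' that sharpens this. Concretely, for
\[
M=\begin{pmatrix}1 & 0\\ -1 & 1\end{pmatrix}\in\mathcal M_2([-1,1]),
\]
your procedure gives $j^\star=\argmax_j M_{1,j}=1$, then $i^\star=\argmin_i M_{i,1}=2$, and outputs $\hat p=(\tfrac12,\tfrac12)$, $\hat q=e_1$, for which
\[
g(M,\hat p,\hat q)=\max_j\tfrac12(M_{1,j}+M_{2,j})-\min_i M_{i,1}=\tfrac12-(-1)=\tfrac32>1.
\]
So the claim $T(\eps)\leq 2$ for all $\eps\geq 1/2$ over $\mathcal M_K([-1,1])$ does not follow from the cited theorem via your construction; the discrepancy is most plausibly a normalisation slip in the paper's restatement (DMP's Theorem~3.1 concerns $[0,1]$-valued games, and for those your argument does deliver $g\leq 1$, i.e.\ a $\tfrac12$-equilibrium, exactly as intended).
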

The following proposition is a well-known observation (see, e.g., \cite{fearnley2015learning-equili}) that having the learner select a pair of uniform plays always gives an approximate equilibrium.
\begin{proposition}
\label{prop:zeroQuery}
  For any $\eps \geq  1 - 1/K$, the query complexity over $\mathcal M_K([-1, 1])$ of finding an $\eps$-equilibrium is $T(\eps) = 0$.
\end{proposition}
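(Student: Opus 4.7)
The plan is to exhibit a deterministic recommendation that achieves the claimed gap without any queries, which establishes $T(\eps)=0$. The natural candidate is the pair of uniform plays $p = q = \mathbf 1 / K$, motivated by the observation at the end of the section that uniform plays always form an approximate equilibrium.

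The main step is to bound $g(M, p, q)$ uniformly over $M \in \mathcal M_K([-1,1])$. Writing out the definition with $p = q = \mathbf 1/K$, we have
\begin{equation*}
  g(M, \mathbf 1/K, \mathbf 1/K)
  = \max_{j} \frac{1}{K}\sum_{i} M_{i,j} \;-\; \min_{i} \frac{1}{K}\sum_{j} M_{i,j}.
\end{equation*}
Let $j^\star$ and $i^\star$ be the indices achieving the max and min. A naive bound gives at most $2$, which is insufficient. The trick is to add and subtract the single entry $M_{i^\star,j^\star}$ inside each sum so that it cancels between the two terms, leaving
\begin{equation*}
  g(M, \mathbf 1/K, \mathbf 1/K)
  = \frac{1}{K}\left( \sum_{i \neq i^\star} M_{i,j^\star} \;-\; \sum_{j \neq j^\star} M_{i^\star,j} \right).
\end{equation*}
Each of the $K-1$ terms in the first sum is at most $1$ and each of the $K-1$ terms subtracted in the second sum is at least $-1$, giving the bound $\frac{1}{K}\cdot 2(K-1) = 2(1 - 1/K)$.

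Combining, for any $\eps \geq 1 - 1/K$ and any $M \in \mathcal M_K([-1,1])$, the uniform pair satisfies $g(M, p, q) \leq 2(1-1/K) \leq 2\eps$, so this is an $\eps$-equilibrium. Since $(p,q)$ was chosen before any interaction with $M$, we conclude $T(\eps) = 0$. I do not expect any real obstacle here; the only non-obvious point is the cancellation of $M_{i^\star, j^\star}$, without which the bound $2$ is not tight enough to match $2\eps = 2 - 2/K$.
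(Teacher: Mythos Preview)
Your proof is correct and follows exactly the approach the paper indicates: it merely states the proposition as a well-known observation, hinting that the uniform pair is an $\eps$-equilibrium, without giving details. Your cancellation of the shared entry $M_{i^\star,j^\star}$ is precisely what tightens the naive bound $2$ down to $2(1-1/K)$, matching the threshold $2\eps$.
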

\subsection{Recovering the Matrix: Upper and Lower Bounds}
\label{sec:doomedApproaches}
In this section we show that the first-order query complexity for recovering the full game matrix (a task harder than computing an equilibrium) is between $\Theta(1)$, independent of the number of actions, and $\Theta(K)$, depending on the set of candidate matrices. An important consequence of these results is that several known standard lower bound techniques cannot be applied to provide lower bounds on the easier task of learning equilibria.
\paragraph{Matrix Sets with Countable Alphabet}
Many existing methods rely on building hard matrices with entries in a finite alphabet: we prove this cannot lead to a lower bound.
\begin{theorem}
  \label{thm:DoomedApproachCoutable}
    Let $\mathcal A \subset \R$ be a countable set of at least two
    numbers. Then the first-order query complexity over $\mathcal
    M_K(\mathcal A)$ of recovering the full matrix is $1$.
  \end{theorem}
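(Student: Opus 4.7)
The plan is to show that the $2K$ real numbers returned by a single first-order query already pin down $M$ among the countably many candidates in $\mathcal M_K(\mathcal A)$. The lower bound $T \geq 1$ is immediate: since $|\mathcal A| \geq 2$, there exist two distinct matrices in $\mathcal M_K(\mathcal A)$, which cannot be distinguished from the empty transcript. For the upper bound I would exhibit a single non-adaptive query $(p, q^\star)$ whose response identifies $M$. The choice of $p$ is irrelevant, so the task reduces to producing a probability vector $q^\star$ in the simplex $\Delta_K \subset \R^K$ such that the linear map $M \mapsto M q^\star$ is injective on $\mathcal M_K(\mathcal A)$.

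To find such a $q^\star$, I would run a measure-theoretic genericity argument. For each ordered pair $M \neq M'$ in $\mathcal M_K(\mathcal A)$, the ``ambiguous'' set $B_{M,M'} = \{q \in \Delta_K : (M - M')q = 0\}$ is the intersection of the linear subspace $\ker(M - M')$ with $\Delta_K$. Evaluating $(M - M')q$ at the vertex $e_i$ returns the $i$-th column of $M - M'$, so $B_{M,M'} = \Delta_K$ would force $M = M'$; hence $B_{M,M'}$ is a proper affine subset of the $(K-1)$-dimensional affine hull of $\Delta_K$, and has $(K-1)$-dimensional Lebesgue measure zero therein.

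Now the countability hypothesis takes over: since $\mathcal A$ is countable, so is $\mathcal M_K(\mathcal A) = \mathcal A^{K^2}$, and the set $B = \bigcup_{M \neq M'} B_{M,M'}$ is a countable union of null sets, hence null. Any $q^\star$ in the relative interior of $\Delta_K$ outside $B$ (such points exist since the interior has positive $(K-1)$-dimensional measure) has the required injectivity property. The learner then plays, for instance, $(p, q^\star)$ with $p$ uniform, observes $M q^\star$, and, knowing $\mathcal A$, decodes $M$ as the unique matrix in $\mathcal M_K(\mathcal A)$ consistent with the observation.

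The main difficulty here is conceptual rather than technical: the key move is to realise that the countability of $\mathcal A$ turns an a priori uncountable obstruction---degenerate queries for each of the continuum of possible pairs---into a countable union of null sets. No delicate estimates are needed once this perspective is adopted. The striking consequence, and presumably the authors' motivation for isolating this statement, is that any lower-bound strategy based on a predefined finite or countable alphabet of payoff values is doomed in the first-order query model, in sharp contrast with the entry-query and best-response-query models discussed in Section~\ref{sec:related-work}.
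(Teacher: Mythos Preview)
Your proof is correct, and it differs from the paper's in an instructive way.

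The paper argues algebraically: it takes the smallest subfield $F \subset \R$ containing $\mathcal A$ (countable since $\mathcal A$ is), views $\R$ as a vector space over $F$, and shows that any query vector $q$ whose coordinates $q_1,\dots,q_K$ are $F$-linearly independent makes $M \mapsto Mq$ injective on $\mathcal M_K(\mathcal A)$. Existence of such a $q$ is then obtained by a probabilistic argument: i.i.d.\ uniform coordinates are almost surely $F$-independent because each $\Span_F\{U_j : j \neq i\}$ is a countable set. Your argument skips the field-theoretic detour and works directly with the countable family of matrix pairs: for each $M \neq M'$ you observe that $\{q \in \Delta_K : (M-M')q = 0\}$ is contained in a proper affine slice of the simplex, hence Lebesgue-null, and conclude by countable additivity.

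Both routes ultimately rest on the same measure-theoretic fact (a countable union of null sets is null), but the intermediate object differs. Your version is more elementary and self-contained; the paper's version has the advantage of isolating a single intrinsic property of $q$ (linear independence of its entries over $F$) that guarantees injectivity uniformly over all matrices, which is what enables the explicit construction in the paper's subsequent remark for finite rational alphabets. One small tightening: your sentence ``hence $B_{M,M'}$ is a proper affine subset of the $(K-1)$-dimensional affine hull'' would read more cleanly as ``hence $\ker(M-M')$ does not contain the affine hull of $\Delta_K$, so $\ker(M-M')\cap\{q:\sum_i q_i=1\}$ has affine dimension at most $K-2$''; as written, the jump from ``$B_{M,M'} \neq \Delta_K$'' to ``measure zero'' elides this step.
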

This result exploits infinite precision in the feedback, and as such does not provide a reasonable algorithm that the players would use should they know that the matrix belongs to $\mathcal M_K(\mathcal A)$. The intent of this statement is to show that any attempt at a lower bound that builds matrices with entries in a fixed countable set (and in particular with integer entries) is doomed to fail after one query.
\begin{remark}
  The proof of Theorem~\ref{thm:DoomedApproachCoutable} does not provide an explicit algorithm. However, if the learner knows that the entries of the matrix are in a finite set $\mathcal A \subset \Q$, an explicit strategy can be easily described. In this case, $\mathcal A$ is of the form $\{ a_1/r, \dots, a_n/r\}$ with $a_i\in \mathbb Z$ and $r \in \N$. Consider a base $b = \max(a_{1:n}) - \min (a_{1:n})$ and consider a query with $p \propto (b^{-1}, b^{-2}, \dots, b^{-K})$ (the choice for $q$ is irrelevant here). The learner can deduce the full matrix $M$ from the single observation $M^\top p$. Of course, this strategy has no practical interest as soon as either $\mathcal A$ or $K$ is moderately large, since it requires arbitrary precision in the outputs.
\end{remark}
\paragraph{Difficulty of Recovering the Matrix Exactly}
In the first-order query model, the learner receives $2K$ numbers every round, so in order to fully determine an
arbitrary $K \times K$ matrix one needs at least $K^2/(2K) = K/2$ rounds.
It turns out that this bound is not tight, and we may need exactly $K$
rounds in the worst case, because of redundancies between the $2K$ numbers we
observe per round:
\begin{theorem}\label{thm:mat_recovery}
  The first-order query complexity over $\mathcal M_K([-1, 1])$ of recovering the full matrix is $K$.
\end{theorem}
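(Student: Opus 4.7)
The plan is to establish matching upper and lower bounds of $K$ queries.

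For the upper bound, a trivial explicit strategy suffices: query with $q_t = e_t$ (the standard basis vectors) and arbitrary $p_t$, for $t = 1, \ldots, K$. The observation $M q_t = M e_t$ is the $t$-th column of $M$, so all $K$ columns---and hence $M$ itself---are recovered after exactly $K$ rounds.

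For the lower bound, I would show that no deterministic strategy using $T = K-1$ queries can succeed, by means of a rank-one kernel argument. Run the strategy against $M = 0$: every observation is then the zero vector, so the queries $(p_1,q_1), \ldots, (p_{K-1}, q_{K-1})$ and the final matrix guess $\widehat M$ are all deterministic and independent of anything else. Since $\spanM(q_{1:K-1})$ has dimension at most $K-1$, there exists a nonzero $v \in \R^K$ with $v \perp q_t$ for all $t$; likewise a nonzero $u \in \R^K$ with $u \perp p_t$ for all $t$. Pick $\alpha > 0$ small enough that $M' \eqdef \alpha uv^\top$ has entries in $[-1,1]$---for instance $\alpha = 1/(\|u\|_\infty \|v\|_\infty)$. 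Then for every $t$,
\begin{equation*}
M' q_t = \alpha u (v^\top q_t) = 0, \qquad (M')^\top p_t = \alpha v (u^\top p_t) = 0,
\end{equation*}
so the observation sequences under $M'$ and under $M=0$ coincide. The strategy therefore outputs the same $\widehat M$ in both cases, but $M' \neq 0$, so it must fail to recover the matrix on at least one of them. Randomized strategies are handled identically by conditioning on the learner's internal randomness (the choices of $u$ and $v$ then depend measurably on that randomness, but the argument goes through pointwise).

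The ingredients are entirely elementary. The only mildly delicate point is adaptivity in the lower bound, and it is disposed of by fixing the adversary matrix to $M = 0$ so that the identically-zero feedback freezes the learner's trajectory independent of the true target. The principal observation is the dimension count: after $K-1$ queries, both $\spanM(p_{1:T})^\perp$ and $\spanM(q_{1:T})^\perp$ are necessarily nontrivial, which is what produces the rank-one perturbation direction $uv^\top$ and closes the factor-of-two gap between the naive degree-of-freedom bound $K/2$ and the sharp answer $K$.
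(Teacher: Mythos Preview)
Your proof is correct and takes a genuinely different route from the paper's.

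The paper works in the $K^2$-dimensional Hilbert space of matrices with inner product $\langle A,B\rangle = \Tr(A^\top B)$, viewing each first-order observation as a bundle of $2K$ linear constraints on $M$. It then counts redundancies carefully: the $t$-th query contributes at most $2(K-t)+1$ genuinely new constraints, because $p_t$ shares one with each of $q_1,\ldots,q_{t-1}$ and $q_t$ shares one with each of $p_1,\ldots,p_t$. By induction this leaves a null space of dimension at least $(K-t)^2$; after $K-1$ rounds a nonzero direction $V$ survives, and the matrices $\tfrac12\mathbf 1\mathbf 1^\top$ and $\tfrac12\mathbf 1\mathbf 1^\top + \tfrac12 V/\|V\|_{1,\infty}$ are indistinguishable.

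Your argument sidesteps the constraint-counting by exploiting rank-one structure directly: freeze the trajectory against $M=0$, pick $u\perp\Span(p_{1:K-1})$ and $v\perp\Span(q_{1:K-1})$ separately (each orthogonal complement is nontrivial by a plain dimension count in $\R^K$), and observe that $M'=\alpha uv^\top$ reproduces the all-zero observations. This is shorter and more elementary, and the handling of adaptivity via the all-zero feedback is clean. The paper's approach buys a little more information---it quantifies the residual freedom as $(K-t)^2$ at every intermediate $t$, not just one surviving direction at the terminal step---but for the statement as written your proof is entirely sufficient and arguably cleaner.
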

Intuitively, it is overkill to recover the matrix exactly to output an $\epsilon$-Nash equilibrium. In the next section, when $\eps \ll 1 / K^K$ (in particular when $\eps = 0$), we show that first-order queries do not provide useful information to find an $\eps$-equilibrium faster than it takes to reveal the whole matrix. 
\section{A New Lower Bound on the Query Complexity of Approximate Equilibria}
\label{sec:lowerbounds}
We switch perspective, attempting to make the life of any learner hard. For this, we need to think about responding to queries to keep the learner unwitting.
\subsection{Overview and Notation}
After $t$ time steps, queries $p_{1:t}$ and $q_{1:t}$ have been made and the outputs provided to the learner are $\smash{\ell_{1:t}^{(q)}}$ and $\smash{\ell_{1:t}^{(p)}}$. Let us denote the set of matrices with entries in $[-1, 1]$ that are consistent with the observations after $t$ time steps by
\begin{equation*}
  \mathcal E_t  = \big\{ M \in \mathcal M_{K}([-1, 1]) \; \big\vert \; \forall s \in [t] \, , \;
  M^\top p_s = -\ell_{s}^{(q)} \quad \text{and} \quad Mq_s = \ell_{s}^{(p)}
  \big\} \, ; 
\end{equation*}
we sometimes refer to this as the set of candidate matrices after $t$ rounds of observations. We omit the dependence on the sequence of queries $(p_s,  q_s)$ and outputs $\smash{(\ell_s^{(p)}, \ell_s^{(q)})}$ to reduce clutter, as it will always be clear from the context.

We say a sequence of matrices $M_{1:T}$ is adapted to the queries $p_{1:T}, q_{1:T}$, if it gives consistent outputs to the queries, i.e., if for all $s \leq t \leq T$, $M_t^\top p_s = M_s^\top p_s$ and $M_t q_s = M_s q_s$; in other words, $M_{1:T}$ is adapted if $M_{t+1} \in \mathcal E_t$ for all $t$.

Instead of defining directly the answers to the queries, we equivalently build a sequence of adapted matrices. Formally, at round $t+1$, given $\mathcal E_t$ and $(p_{t+1}, q_{t+1})$, we select a matrix $M_{t+1} \in \mathcal E_t$ and output $\smash{(\ell_{t+1}^{(p)}, \ell_{t+1}^{(q)})} = (M_{t+1} q_{t+1}, -M_{t+1}^\top p_{t+1})$.

Let us fix some common technical notation that we use in the proofs, to measure distances in matrix space.
For a matrix $M \in \mathcal M_K(\R)$, and for $r, s \in [1, \infty]$, we denote the operator norm of $M$ by
$
  \|M\|_{r, s} = \sup_{x \in \R^K : \, \|x\|_r = 1 } \|Mx\|_s \, . 
$
Recall that for $y, z \in \R^K$, we have $\|yz^\top \|_{r, s}  = \|y\|_s \|z\|_{r'}$, where $r'\in[1, \infty]$ is such that $1 / r + 1/r' = 1$. In particular $\|M\|_{1, \infty} = \max_{i, j\in [K]}|M_{i, j}|$, and $\|yz^\top\|_{1, \infty} = \|y\|_{\infty}\|z\|_{\infty}$. If $\mathcal F$ is a closed convex set subset of $\R^K$, we denote by $\Proj_{\mathcal F}(x)$ the orthogonal projection of $x$ on $\mathcal F$.

\paragraph{Common Structure of the Proofs}
Both proofs of Theorems~\ref{thm:exact_lower_bound} and~\ref{thm:approx-lower-bound} follow the same template:
\begin{itemize}[itemsep=0pt]
  \item[--] Assume an $\eps$-equilibrium is found after $T$ first-order queries.
  \item[--] Find a necessary condition on this equilibrium formulated as
  constraints on the observed losses and the sequence of queries.
  Precisely, we observe that under some initial assumptions on the game
  matrix, the all-ones vector $\mathbf 1$ needs to lie near the span of the outputs observed by the player, regardless of the queries and outputs.
  \item[--] Ensure that this necessary condition cannot be met by any pair of mixed actions $(p, q)$ by building an appropriate adapted sequence of matrices. In our case, we ensure that the span of the losses of the $q$-player stays far from the all-ones vector $\mathbf 1$.
\end{itemize}
This proof structure is a promising way to derive the exact query complexity of (approximate) Nash equilibria. We use it to provide the first non-trivial lower bound for this setting.

The next two sections implement this template to prove query complexity lower bounds for exact $\eps=0$ (Theorem~\ref{thm:exact_lower_bound}) and approximate $\eps>0$ (Theorem~\ref{thm:approx-lower-bound}) equilibria. Detailed proofs are in Appendix~\ref{app:proofsforlowerbounds}. While we could have deduced the final lower bound for exact equilibria directly from the approximate equilibria case, we describe the result separately to ease exposition. Indeed, the bound for the exact case contains the main ideas but is technically simpler.
\subsection{Proof: Exact Equilibrium Case}
\paragraph{Step I: Common Equilibria are in the Span of Past Queries}
We start by defining an initial set of candidate matrices with some special properties. These properties are used to ensure that the common equilibrium is fully mixed, and therefore an equalizing strategy for both players. 
\begin{definition}\label{def:A0}
  A set of matrices $B_0\subset \mathcal M_K([-1, 1])$ satisfies the assumptions $A(0)$ if 
  \begin{itemize}[noitemsep,topsep=0pt]
    \item[--] For all $M \in  B_0$, all equilibria of $M$ are fully mixed,
    \item[--] There exists a matrix $M \in B_0$ with non-zero value.
  \end{itemize}
\end{definition}
By Lemma~\ref{lem:ball} (App.~\ref{app:proofsforlowerbounds}), any ball centered at $(1/2)I_K$ with small enough radius satisfies $A(0)$, e.g.,
\begin{equation*}
  B_0 
  = \mathcal B_{\| \cdot\|_{1, \infty}}\Big(\frac{1}{2}I_K, \frac{1}{16K^2}\Big)
  = \Big\{ M \in \mathcal M_K(\R)  : \Big|M_{i, j} -  \frac{1}{2} \delta_{i = j}\Big| \leq \frac{1}{16K^2} \quad  \forall i, j \in [K] \Big\} \, .
\end{equation*}
The following lemma states that after $t$ first-order queries, if at least one matrix in $B_0$ is still a candidate matrix, then any common equilibrium to all candidate matrices must lie in the span of the queries. Since equilibria of matrices in $B_0$ are fully mixed, and are thus equalizing strategies, this further implies that the all-ones vector $\mathbf 1$ must lie in the span of the outputs to the queries.
\begin{lemma}\label{lem:p-in-span}
  Let $B_0$ be a set of matrices satisfying $A(0)$. Assume in addition that $\mathcal E_t \cap B_0 \neq \varnothing$ and that $\mathcal E_t \cap \mathcal M_K((-1, 1)) \neq \varnothing$. If there exists a common Nash equilibrium $(p, q)$ to all matrices in $\mathcal E_t$, then $p \in \Span(p_{1:t})$ and $q \in \Span(q_{1:t})$. 
\end{lemma}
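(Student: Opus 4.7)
The plan is to prove the contrapositive by constructing a rank-one perturbation of a candidate matrix in $\mathcal E_t$ that breaks the common equilibrium. By the symmetry between the roles of $p$ and $q$, it suffices to show $p \in \Span(p_{1:t})$; so I assume for contradiction that $p \notin \Span(p_{1:t})$. The starting observation is that $(p, q)$ is fully mixed: fixing any $M_0 \in \mathcal E_t \cap B_0$, property $A(0)$ forces every equilibrium of $M_0$, and in particular $(p, q)$, to be fully mixed. Since $(p, q)$ is then an interior Nash equilibrium of every $M \in \mathcal E_t$, the first-order optimality conditions on the simplex yield the equalizing identities
\begin{equation*}
  M q \;=\; v_M \mathbf 1 \quad \text{and} \quad M^\top p \;=\; v_M \mathbf 1, \qquad v_M \eqdef p^\top M q .
\end{equation*}

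I would then build a feasible perturbation direction. Let $u \eqdef p - \Proj_{\Span(p_{1:t})}(p)$, which is non-zero by the contradiction hypothesis, lies in $\Span(p_{1:t})^\perp$, and satisfies $u^\top p = \|u\|_2^2 > 0$. Next, pick any non-zero $v \in \Span(q_{1:t})^\perp$; crucially, $v \notin \Span(\mathbf 1)$, because the simplex constraint $\mathbf 1^\top q_s = 1 \neq 0$ forbids $\mathbf 1$ from lying in $\Span(q_{1:t})^\perp$. Setting $\Delta \eqdef u v^\top$, one checks that $\Delta q_s = (v^\top q_s) u = 0$ and $\Delta^\top p_s = (u^\top p_s) v = 0$ for every $s \leq t$, so $\Delta$ lies in the kernel of all observational constraints.

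The final step exploits $\mathcal E_t \cap \mathcal M_K((-1, 1)) \neq \varnothing$: pick $M^\star$ in this intersection. Since its entries lie strictly in $(-1, 1)$, for sufficiently small $\delta > 0$ the perturbed matrix $M^\star + \delta \Delta$ has entries in $[-1, 1]$, hence belongs to $\mathcal E_t$. Applying the equalizing identity to both $M^\star$ and $M^\star + \delta \Delta$ gives
\begin{equation*}
  v_{M^\star + \delta \Delta} \mathbf 1 \;=\; (M^\star + \delta \Delta)^\top p \;=\; v_{M^\star} \mathbf 1 + \delta \|u\|_2^2 \, v ,
\end{equation*}
which, since $\delta \|u\|_2^2 > 0$, forces $v \in \Span(\mathbf 1)$, contradicting the second paragraph. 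The main technical subtlety is guaranteeing the existence of a non-zero $v \in \Span(q_{1:t})^\perp$ disjoint from $\Span(\mathbf 1)$: both properties are delivered by the simplex structure $\mathbf 1^\top q_s = 1$ as long as $q_{1:t}$ does not linearly span $\R^K$, which is precisely the regime in which the lemma feeds into the query-complexity lower bounds.
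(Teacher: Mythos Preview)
Your argument is correct and follows essentially the same rank-one perturbation idea as the paper: pick $\bar p = p - \Proj_{\Span(p_{1:t})}(p)$ and a nonzero $u_q \perp \Span(q_{1:t})$, then perturb an interior matrix $M^\star \in \mathcal E_t$ by $\bar p u_q^\top$. The only cosmetic difference is in the endgame: the paper first notes that $v_M = \langle p, \ell_1^{(p)}\rangle$ is the same for every $M \in \mathcal E_t$, so $(M-M')^\top p = \mathbf 0$ directly yields $\alpha\|\bar p\|^2 u_q = \mathbf 0$; you instead let $v_M$ vary and derive that the perturbation direction must lie in $\Span(\mathbf 1)$, contradicting $\mathbf 1 \notin \Span(q_{1:t})^\perp$ via the simplex constraint.
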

%
%
\begin{corollary}\label{cor:v-in-span}
  Under the assumptions of Lemma~\ref{lem:p-in-span}, there exists a value $v \neq 0$ such that
  \begin{equation*}
    v \mathbf 1 
    \in \Span( \ell_{1:t}^{(q)}) 
    \cap \Span( \ell_{1:t}^{(p)}) \, . 
  \end{equation*}
\end{corollary}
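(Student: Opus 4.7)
The proof strategy is a short three-step calculation: pull a game value $v$ out of the fully mixed equilibrium, push $v\mathbf{1}$ into the span of past losses by expanding $p$ and $q$ in the span of past queries, and then check that this $v$ is non-zero.

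Concretely, I would first invoke Lemma~\ref{lem:p-in-span} to write $p = \sum_{s=1}^{t} \alpha_s p_s$ and $q = \sum_{s=1}^{t} \beta_s q_s$ for some scalars $\alpha_s, \beta_s \in \R$. Since $\mathcal{E}_t \cap B_0 \neq \varnothing$ by hypothesis, pick any matrix $M$ in this intersection. The first bullet of $A(0)$ forces all equilibria of $M$ to be fully mixed, so $(p,q)$ is fully mixed and hence equalizing for $M$: there exists a scalar $v$ (the game value of $M$) with $M^\top p = v\mathbf{1}$ and $Mq = v\mathbf{1}$. Substituting the span decompositions and using that $M \in \mathcal{E}_t$ means $M^\top p_s = -\ell_s^{(q)}$ and $Mq_s = \ell_s^{(p)}$ for all $s \le t$,
\begin{align*}
v\mathbf{1} \;=\; M^\top p \;=\; -\sum_{s=1}^{t} \alpha_s \ell_s^{(q)} &\;\in\; \Span(\ell_{1:t}^{(q)}), \\
v\mathbf{1} \;=\; M q \;=\; \phantom{-}\sum_{s=1}^{t} \beta_s \ell_s^{(p)} &\;\in\; \Span(\ell_{1:t}^{(p)}),
\end{align*}
so $v\mathbf{1}$ lies in the intersection of the two spans, as required.

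It remains to argue $v \neq 0$. The identity $v = p^\top (M q) = -\sum_s \alpha_s\, p_s^\top \ell_s^{(q)}$ shows that $v$ is already pinned down by the recorded observations and by $(p,q)$, and therefore does not depend on which $M \in \mathcal{E}_t \cap B_0$ we selected. The second bullet of $A(0)$ supplies at least one matrix in $B_0$ with non-zero value; in the lower bound construction $B_0$ is instantiated by Lemma~\ref{lem:ball} as the $\|\cdot\|_{1,\infty}$-ball around $\tfrac{1}{2} I_K$, on which every matrix has value close to $1/(2K)$. This upgrades ``some'' to ``every'' and yields $v \neq 0$. I expect this reconciliation between the abstract existence clause of $A(0)$ and the concrete requirement that \emph{the} value shared by matrices in $\mathcal{E}_t \cap B_0$ be non-zero to be the only delicate point; the span inclusions themselves are a direct substitution once Lemma~\ref{lem:p-in-span} is available.
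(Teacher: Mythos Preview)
Your argument is correct and is precisely the one the paper leaves implicit (Corollary~\ref{cor:v-in-span} is stated without proof there; the span inclusions follow by the substitution you describe once Lemma~\ref{lem:p-in-span} gives $p\in\Span(p_{1:t})$, $q\in\Span(q_{1:t})$). One typo: your displayed identity should read $v = q^\top(M^\top p) = -\sum_s \alpha_s \langle q, \ell_s^{(q)} \rangle$, not $-\sum_s \alpha_s\, p_s^\top \ell_s^{(q)}$; but the conclusion that $v$ is independent of $M \in \mathcal{E}_t$ is unaffected (and is in fact already recorded inside the proof of Lemma~\ref{lem:p-in-span} via $v = \langle p, \ell_1^{(p)} \rangle$). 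Your diagnosis of the $v \neq 0$ subtlety is also accurate: the abstract clause in $A(0)$ only guarantees \emph{some} matrix in $B_0$ has nonzero value, and that matrix need not lie in $\mathcal{E}_t$, so the conclusion as stated really leans on the concrete instantiation $B_0 = \mathcal{B}_{\|\cdot\|_{1,\infty}}(\tfrac{1}{2}I_K, 1/(16K^2))$, where every matrix has value bounded away from zero.
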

\paragraph{Step II: Sequential construction}
Given Corollary~\ref{cor:v-in-span}, to prove a query complexity lower bound, it suffices to build the answers to the queries $p_{1:t}$ and $q_{1:t}$ in a way that ensures that
\begin{itemize}[noitemsep,topsep=0pt]
 \item[--] the vector $\mathbf 1$ never belongs to the span of the observations of (say) the $q$-player,
 \item[--] there is at least one remaining candidate matrix in $B_0$, i.e., $\mathcal E_t \cap B_0 \neq \varnothing$ \, . 
\end{itemize}
For technical reasons (cf.\ the assumptions of Lemma~\ref{lem:p-in-span}), we also need to make sure that there is enough space left in $\mathcal E_t$. We do so by keeping a candidate matrix $M_t \in \mathcal E_t \cap \mathcal M_K((-1,1))$, away from the border of the initial set of candidate matrices (i.e., with entries stricly between $-1$ and $1$).
\begin{lemma}\label{lem:construction-mt-exact}
  Fix a time horizon $T \leq K/2-1$. For any sequence of queries $p_{1:t}, q_{1:t}$, there exists a sequence of matrices $M_{1:T}$ in $\mathcal M_K((-1, 1))$ adapted to $p_{1:T}, q_{1:T}$ that defines losses $\smash{\ell_{1:T}^{(q)}}$ for which, for any $v \neq 0$,
  \begin{equation*}
    v \mathbf 1 \notin \Span(\ell_{1:T}^{(q)}) \, . 
  \end{equation*}
\end{lemma}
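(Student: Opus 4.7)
The plan is to construct $M_{1:T}$ inductively, maintaining the invariant that $\mathbf{1} \notin W_t := \Span(\ell^{(q)}_{1:t})$ throughout. I start from the interior matrix $\frac{1}{2} I_K \in \mathcal M_K((-1,1))$ and at each round add a small rank-one perturbation to the current candidate that preserves the past-observation constraints while steering the new loss away from the forbidden subspace $B_t := \Span(W_t, \mathbf{1})$.

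Given $M_t \in \mathcal E_t \cap \mathcal M_K((-1,1))$ with $\mathbf{1} \notin W_t$, I would handle round $t+1$ by cases. If $p_{t+1} \in \Span(p_{1:t})$, then for every $M \in \mathcal E_t$ the product $M^\top p_{t+1}$ is the same fixed linear combination of the past losses $-\ell^{(q)}_s$, so $\ell^{(q)}_{t+1}$ lies automatically in $W_t$ and I can set $M_{t+1} := M_t$. Otherwise I pick $u \in \Span(p_{1:t})^\perp$ with $u^\top p_{t+1} \neq 0$, which exists because $p_{t+1} \notin \Span(p_{1:t})$, and $v \in \Span(q_{1:t})^\perp \setminus B_t$, and set $M_{t+1} := M_t + \varepsilon u v^\top$ for a small $\varepsilon > 0$. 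The orthogonalities enforce $(uv^\top)^\top p_s = 0$ and $(uv^\top)q_s = 0$ for all $s \le t$, so $M_{t+1} \in \mathcal E_t$, and small $\varepsilon$ preserves the membership in $\mathcal M_K((-1,1))$.

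The new loss is $\ell^{(q)}_{t+1} = -M_t^\top p_{t+1} - \varepsilon (u^\top p_{t+1}) v$. Since $v \notin B_t$ and $u^\top p_{t+1} \neq 0$, the condition $\ell^{(q)}_{t+1} \in B_t$ becomes a single linear equation in $\varepsilon$, so at most one exceptional value has to be avoided. Any other sufficiently small $\varepsilon$ places $\ell^{(q)}_{t+1}$ outside $B_t$, and the invariant is maintained: if $\mathbf{1} \in W_{t+1} = W_t + \R \ell^{(q)}_{t+1}$, then either $\mathbf{1} \in W_t$ (contradicting the inductive hypothesis) or $\ell^{(q)}_{t+1} \in B_t \setminus W_t$ (contradicting our construction).

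The main step, and the only place where the hypothesis $T \le K/2-1$ is really needed, is the existence of $v \in \Span(q_{1:t})^\perp \setminus B_t$. A dimension count gives $\dim \Span(q_{1:t})^\perp \ge K - t$ and $\dim B_t \le \dim W_t + 1 \le t + 1$, so $\Span(q_{1:t})^\perp \not\subseteq B_t$ whenever $K - t > t + 1$, equivalently $t < (K-1)/2$. This must hold at every round $t+1 \le T$, which happens precisely when $T \le K/2 - 1$. The rest is bookkeeping: at each step the finitely many constraints on $\varepsilon$ (entry-bounds inside $(-1,1)$ plus the single exceptional value) leave a whole interval of admissible perturbations, so the induction goes through and yields $v\mathbf{1} \notin \Span(\ell^{(q)}_{1:T})$ for every $v \neq 0$.
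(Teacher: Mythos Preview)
Your proof is correct and follows essentially the same inductive rank-one-perturbation construction as the paper, starting from $\tfrac{1}{2}I_K$ and adding at each step a matrix $uv^\top$ with $u\in\Span(p_{1:t})^\perp$ and $v\in\Span(q_{1:t})^\perp$. The only minor difference is that the paper enforces $\mathbf 1\notin W_{t+1}$ by taking the column direction orthogonal to $\ell_{1:t}^{(q)}$, $\mathbf 1$, and $M_t^\top p_{t+1}$ directly (so that an inner product with $u_t$ yields the contradiction), whereas you only ask $v\notin B_t$ and then avoid a single exceptional $\varepsilon$; this shaves one constraint off the dimension count but is otherwise the same idea.
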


\paragraph{Conclusion}
Combining Lemmas~\ref{lem:p-in-span} and~\ref{lem:construction-mt-exact}, against any learning strategy, we have built a sequence of outputs for which there is no common equilibrium to all remaining candidate matrices after $T$ rounds, as long as $T \leq (K-3)/2$; this proves the following theorem.
\begin{theorem}\label{thm:exact_lower_bound}
  The first-order query complexity over $\mathcal M_K([-1, 1])$ of finding a Nash equilibrium is
  \begin{equation*}
    T(0) \geq K / 2 - 1 \, .
  \end{equation*}
\end{theorem}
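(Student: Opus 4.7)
The plan is to combine the two preceding lemmas exactly as the ``Conclusion'' paragraph above indicates. Lemma~\ref{lem:p-in-span} together with Corollary~\ref{cor:v-in-span} force any common Nash equilibrium of the surviving matrices to produce some $v\mathbf{1}$ with $v\neq 0$ lying in $\Span(\ell_{1:T}^{(q)})$, while Lemma~\ref{lem:construction-mt-exact} exhibits an adversarial response for which this never happens. Playing these two statements against each other rules out the existence of any common equilibrium, which is exactly what the learner must output.

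Concretely, I would fix an arbitrary learner that issues $T \le K/2-1$ queries $(p_1, q_1), \dots, (p_T, q_T)$ and returns a pair $(\hat p, \hat q)$. Take the initial candidate set to be $B_0 = \mathcal{B}_{\|\cdot\|_{1,\infty}}((1/2)I_K,\, 1/(16K^2))$, which satisfies Assumption $A(0)$ by the remark below Definition~\ref{def:A0} (via Lemma~\ref{lem:ball}). As the adversary, I respond to the learner's queries with the losses $(\ell_t^{(p)}, \ell_t^{(q)}) = (M_t q_t, -M_t^\top p_t)$ produced by the adapted sequence $M_{1:T} \in \mathcal{M}_K((-1,1))$ of Lemma~\ref{lem:construction-mt-exact}; since $T \le K/2-1$, that lemma guarantees $v\mathbf{1} \notin \Span(\ell_{1:T}^{(q)})$ for every $v \neq 0$. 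Keeping $M_t$ inside $B_0$ throughout the construction (the $1/(16K^2)$ radius provides enough slack) simultaneously ensures $\mathcal{E}_T \cap B_0 \neq \varnothing$ and $\mathcal{E}_T \cap \mathcal{M}_K((-1,1)) \neq \varnothing$, so both hypotheses of Lemma~\ref{lem:p-in-span} are met at the end of the interaction.

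To close the argument, I would note that the learner cannot distinguish the matrices in $\mathcal{E}_T$ from each other, so for the output $(\hat p, \hat q)$ to be a valid Nash equilibrium for the true (unknown) matrix, it must be a Nash equilibrium of every $M \in \mathcal{E}_T$, i.e., a common equilibrium. Applying Corollary~\ref{cor:v-in-span} then yields some $v \neq 0$ with $v\mathbf{1} \in \Span(\ell_{1:T}^{(q)})$, contradicting the property secured in the previous paragraph. Hence no learner strategy using at most $K/2-1$ queries can succeed against this adversary, which is precisely $T(0) \ge K/2 - 1$.

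Essentially all of the work lives inside the two lemmas; the top-level theorem is a one-paragraph contradiction. The main obstacle I would expect during formalisation is the compatibility bookkeeping in the construction of Lemma~\ref{lem:construction-mt-exact}: namely, verifying that the adapted matrices $M_{1:T}$ can be chosen to simultaneously (i) lie in the open set $\mathcal{M}_K((-1,1))$, (ii) remain inside $B_0$ so that Lemma~\ref{lem:p-in-span} applies, and (iii) keep $\mathbf{1}$ out of the span of the $q$-player losses. Each round ``uses up'' roughly $2K-1$ linear constraints on matrix space, and the threshold $K/2-1$ is exactly what this dimensional budget allows while leaving room for the $\Span(\ell_{1:T}^{(q)})$-avoidance condition on a $K$-dimensional subspace of $\R^K$.
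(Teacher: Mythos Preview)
Your proposal is correct and mirrors the paper's own argument: the theorem is obtained by playing Corollary~\ref{cor:v-in-span} against Lemma~\ref{lem:construction-mt-exact} to produce a contradiction, exactly as you describe. You even flag the one point the paper leaves implicit, namely that the perturbations in Lemma~\ref{lem:construction-mt-exact} must be chosen small enough to keep $M_t$ inside $B_0$ (not merely inside $\mathcal M_K((-1,1))$) so that the hypothesis $\mathcal E_T\cap B_0\neq\varnothing$ of Lemma~\ref{lem:p-in-span} is satisfied; since the construction starts at $M_0=(1/2)I_K$ and the norm of each $u_t$ is free, this is harmless.
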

The next section tackles the lower bound for approximate equilibria. Its proof follows the same template as for the exact case, although the technical complexity increases.
\subsection{Proof: Approximate Equilibria Case}
\paragraph{Step I: Common Equilibria Are Close to the Span of Queries}
In the following, we say a probability distribution over $[K]$ is $\delta$-supported if $p(i)\geq \delta$ for all $i$. We say a pair of distributions $(p, q)$ is $\delta$-supported if both $p$ and $q$ are $\delta$-supported. We start by defining a quantitative version of Definition~\ref{def:A0} for approximate equilibria. In order to retain the property that losses at $\eps$-equilibria stay close to an isotropic vector, we add a requirement on the support of the equilibria.
\begin{definition}\label{def:Aeps}
  A set of matrices $B_{\eps, \delta}\subset \mathcal M_K([-1, 1])$ satisfies the assumption $A(\eps, \delta)$ if
  \begin{itemize}[noitemsep,topsep=0pt]
    \item[--] For all $M \in  B_{\eps, \delta}$, all $\eps$-equilibria of $M$ are $\delta$-supported, 
    \item[--] There exists a matrix $M \in B_{\eps, \delta}$ with non-zero value.
  \end{itemize}
\end{definition}
For example, by Corollary~\ref{cor:open-fully-mixed-eps-ne}, the $\|\cdot\|_{1, \infty}$-ball centered at $(1/2) I_K$ and of radius $(1 / 16) K ^2$ satisfies this condition for any $\eps \leq 1 / (16K^2)$ and $\delta \leq  1/(2K)$.

The following proposition is a quantitative version of Lemma~\ref{lem:p-in-span} for approximate equilibria, in which we show that any common approximate equilibrium to all matrices in $\mathcal E_t$ needs to be close to the span of the queries.
\begin{lemma}\label{lem:p-in-span-appr}
  For $\eps, \delta > 0$, let $B_{\eps, \delta}$ be a set of matrices satisfying $A(\eps, \delta)$, cf.\ Definition~\ref{def:Aeps}. Assume that $\mathcal E_t \cap B_{\eps, \delta} \neq \varnothing$, and that the relative interior of $\mathcal E_t$ contains a ball of radius $r_t$ measured in $\|\cdot\|_{1, \infty}$-norm.
  If there exits a common $\eps$-Nash equilibrium $(p, q)$ sto all matrices in $\mathcal E_t$, then
  \begin{equation*}
    \| p - \Proj_{\Span(p_{1:t})}(p) \| \leq \frac{2\eps}{\delta r_t} 
    \quad \text{and } \quad 
    \| q - \Proj_{\Span(q_{1:t})}(q) \| \leq \frac{2\eps}{\delta r_t} \, .
  \end{equation*}
\end{lemma}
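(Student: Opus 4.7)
The plan is to make the proof of Lemma~\ref{lem:p-in-span} quantitative. I focus on bounding $p_\perp := p - \Proj_{\Span(p_{1:t})}(p)$; the bound on $q_\perp$ will follow by the symmetric argument with $Pq$ in place of $P^\top p$.

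First, I would use the matrix $M_0 \in \mathcal E_t \cap B_{\eps,\delta}$ guaranteed by the hypothesis: since $(p,q)$ is an $\eps$-NE of $M_0$, property $A(\eps,\delta)$ forces $(p,q)$ to be $\delta$-supported. A standard pigeonhole using $q_j \geq \delta$ then gives, for \emph{every} $M \in \mathcal E_t$,
\[
  \max_j (M^\top p)_j - \min_j (M^\top p)_j \,\leq\, \frac{2\eps}{\delta}\,,
\]
so the vector $M^\top p$ lies within an $\|\cdot\|_\infty$-tube of width $2\eps/\delta$ around $\Span(\mathbf{1})$.

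Second, I exploit the relative-interior ball. Let $\bar M$ denote its center: then $\bar M + P \in \mathcal E_t$ for every $P$ with $\|P\|_{1,\infty}\leq r_t$ satisfying the observation constraints $P^\top p_s = 0 = Pq_s$ for $s\leq t$. Applying the tube bound to both $\bar M$ and $\bar M + P$ and subtracting, $P^\top p$ has componentwise spread at most $4\eps/\delta$. I would then test this against the rank-one family $P = \alpha\,y z^\top$ with $y \in \Span(p_{1:t})^\perp$, $z \in \Span(q_{1:t})^\perp$, and $\alpha = r_t/(\|y\|_\infty \|z\|_\infty)$; for this choice $P^\top p = \alpha(y^\top p)\,z$, whose spread is $\alpha\,|y^\top p|\,(\max_j z_j - \min_j z_j)$.

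The key auxiliary fact I need is that $\max_j z_j - \min_j z_j \geq \|z\|_\infty$ for any nonzero $z \in \Span(q_{1:t})^\perp$: indeed $z^\top q_s = 0$ against a probability distribution $q_s$ rules out $z$ having entries of a common sign, and a short case split turns this into the claimed inequality. Combining everything, $r_t\,|y^\top p|/\|y\|_\infty \leq 4\eps/\delta$. Specializing to $y = p_\perp$ yields $y^\top p = \|p_\perp\|_2^2$ and $\|y\|_\infty \leq \|p_\perp\|_2$, so $\|p_\perp\|_2 \leq 4\eps/(\delta r_t)$, matching the stated bound up to an absolute constant (which I expect can be tightened by a sharper pigeonhole in the first step).

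The main obstacle I foresee is purely to ensure that the rank-one perturbation simultaneously respects the observation constraints \emph{and} keeps every entry of $\bar M + P$ inside $[-1,1]$, which is exactly why the relative-interior radius $r_t$ appears in the denominator. Without this room to perturb, the exact-case argument (Lemma~\ref{lem:p-in-span}) offers no quantitative extension, since the infinitesimal contradiction used there loses all of its strength.
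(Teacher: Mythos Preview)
Your proposal is correct and follows essentially the same route as the paper's proof: both use the $\delta$-support to bound the range of $M^\top p$ across $\mathcal E_t$, then test against a rank-one perturbation $\bar p\, u^\top$ (with $\bar p \perp p_{1:t}$, $u \perp q_{1:t}$) of $\|\cdot\|_{1,\infty}$-norm $r_t$ inside the relative-interior ball. The paper obtains the sharper constant $2$ by observing that $p^\top M q_1 = p^\top \ell_1^{(p)}$ is the same for every $M \in \mathcal E_t$, which lets one bound $\|(M-M')^\top p\|_\infty$ directly (rather than just its spread) and thereby sidestep your auxiliary fact about $z$.
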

\begin{corollary}\label{cor:1closetospan}
  Under the assumptions of Lemma~\ref{lem:p-in-span-appr}, there exists a game matrix $M \in B_{\eps, \delta}$ such that the value $v \in \R$ of $M$ satisfies
  $
    \|v \mathbf 1 - \Proj_{\Span(\ell_{1:t}^{(q)})}(v \mathbf 1)\| \leq 4\sqrt K \eps  / (\delta r_t) \, .
  $
\end{corollary}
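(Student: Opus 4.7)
The plan is to lift Lemma~\ref{lem:p-in-span-appr} from the common equilibrium $(p,q)$ to the induced loss vector, by combining it with the standard complementary-slackness fact that at a $\delta$-supported $\eps$-equilibrium of $M$ the loss $M^\top p$ is uniformly within $2\eps/\delta$ of $v\mathbf 1$. The argument then proceeds by a single triangle inequality.

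First I would pick any matrix $M \in \mathcal E_t \cap B_{\eps,\delta}$ (non-empty by hypothesis) and let $v$ denote its value. Definition~\ref{def:Aeps} forces the common $\eps$-equilibrium $(p,q)$ of all matrices in $\mathcal E_t$ to be $\delta$-supported for $M$. Writing $U = \max_j (M^\top p)_j$ and $L = \min_i (Mq)_i$, weak duality (or minimax) gives $L \leq v \leq U$, while the $\eps$-equilibrium hypothesis gives $U - L \leq 2\eps$. Expanding
\begin{equation*}
  U - p^\top M q \;=\; \sum_j q_j \bigl(U - (M^\top p)_j\bigr)
\end{equation*}
as a sum of non-negative terms bounded by $U - L \leq 2\eps$, and using $q_j \geq \delta$, yields $0 \leq U - (M^\top p)_j \leq 2\eps/\delta$ for every $j$. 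Combining with $|U - v| \leq 2\eps$, I obtain $\|M^\top p - v \mathbf 1\|_\infty \leq 2\eps/\delta$, and therefore $\|M^\top p - v \mathbf 1\|_2 \leq 2\sqrt K \, \eps/\delta$.

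Second I would project: let $\tilde p = \Proj_{\Span(p_{1:t})}(p)$, so that Lemma~\ref{lem:p-in-span-appr} gives $\|p - \tilde p\|_2 \leq 2\eps/(\delta r_t)$. Since $M \in \mathcal E_t$ means $M^\top p_s = -\ell_s^{(q)}$, linearity places $M^\top \tilde p \in \Span(\ell^{(q)}_{1:t})$. The definition of orthogonal projection and a triangle inequality then give
\begin{equation*}
  \bigl\|v \mathbf 1 - \Proj_{\Span(\ell^{(q)}_{1:t})}(v \mathbf 1)\bigr\|_2
  \;\leq\; \bigl\|v \mathbf 1 - M^\top \tilde p\bigr\|_2
  \;\leq\; \bigl\|v \mathbf 1 - M^\top p\bigr\|_2 + \|M\|_{2,2}\, \|p - \tilde p\|_2 \, .
\end{equation*}
Plugging in the two bounds just derived, invoking $\|M\|_{2,2} \leq 1 \leq \sqrt K$ (immediate for the concrete $B_0 = \mathcal B_{\|\cdot\|_{1,\infty}}((1/2)I_K, 1/(16K^2))$ introduced after Definition~\ref{def:Aeps}, since a matrix there is a $1/(16K^2)$-entrywise perturbation of $(1/2)I_K$), and using $r_t \leq 1$ to absorb $1/\delta \leq 1/(\delta r_t)$, the right-hand side collapses to $2\sqrt K \eps/(\delta r_t) + 2\eps/(\delta r_t) \leq 4\sqrt K\, \eps/(\delta r_t)$, as wanted.

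I expect the only obstacle to be constant bookkeeping rather than any conceptual difficulty: one has to verify that the spectral norm of $M$ is harmless (which requires working with the concrete spectrally-small $B_{\eps,\delta}$ used in the paper rather than with a fully abstract one, since a generic element of $\mathcal M_K([-1,1])$ has operator 2-norm up to $K$), and that converting the natural $\ell_\infty$ bound on $M^\top p - v\mathbf 1$ to $\ell_2$ loses at most the $\sqrt K$ factor that the statement allows. Both are straightforward for the concrete ball $B_0$.
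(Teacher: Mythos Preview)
Your proposal is correct and follows essentially the same route as the paper's proof: pick $M\in\mathcal E_t\cap B_{\eps,\delta}$, observe that $M^\top\tilde p\in\Span(\ell_{1:t}^{(q)})$, and split $\|v\mathbf 1-M^\top\tilde p\|$ by the triangle inequality into a ``near-equalizing'' term $\|v\mathbf 1-M^\top p\|$ (controlled by the range bound at a $\delta$-supported $\eps$-equilibrium) and a projection term $\|M^\top(p-\tilde p)\|$ (controlled by Lemma~\ref{lem:p-in-span-appr}). The paper invokes the range inequality \eqref{eq:range-at-eq} from the proof of Lemma~\ref{lem:p-in-span-appr} and writes $\|M^\top\bar p\|\leq\sqrt K\|\bar p\|$ without comment, whereas you rederive the range bound and explicitly flag that the operator-norm step requires the concrete ball around $\tfrac12 I_K$; this is a legitimate and arguably more careful treatment of the same argument.
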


  \paragraph{Step II: Construction of Matrices}
  We design an adapted sequence of matrices that keeps $v \mathbf 1$ away from the span of the observed losses.
  \begin{lemma}\label{lem:construction-mt}
    Let $B$ be a closed ball in $\|\cdot \|_{1, \infty}$-norm of radius $r \leq 1/2$ contained in $\mathcal M_K([-1, 1])$.
    Fix a time horizon $T \leq K/2-1$. For any sequence of queries $(p_t, q_t)_{t \leq T}$, there exists a sequence of matrices $M_{1:T}$ in $B$ adapted to $(p_t, q_t)$ that defines losses $\ell_{1:T}^{(q)}$ for which, for any $v \geq 0$,
    \begin{equation*}
      \|v \mathbf 1 - \Proj_{\Span(\ell_{1:T}^{(q)})}(v \mathbf 1) \|^2
      \geq v^2 K \Big(\frac{r^2}{8KT^2} \Big)^{T+1} \, ,
    \end{equation*}
    and such that $\mathcal E_t$ contains a ball of radius $r/2$ in its relative interior.
  \end{lemma}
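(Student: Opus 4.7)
Plan: Following the template of Lemma~\ref{lem:construction-mt-exact}, I build the adapted sequence $M_t = M_{t-1} + \Delta_t$ inductively, starting from $M_0 = \bar{M}$, the center of $B$. Perturbations are drawn from the adapted subspace $\mathcal{D}_t := \{\Delta : \Delta q_s = 0 \text{ and } \Delta^\top p_s = 0,\ \forall s < t\}$, so that past observations are automatically preserved. Every element of $\mathcal{D}_t$ is a sum of rank-one matrices $uv^\top$ with $u \in \Span(p_{1:t-1})^\perp$ and $v \in \Span(q_{1:t-1})^\perp =: Q_t^\perp$; I take $\Delta_t$ of the rank-one form $\Delta_t = (p_t^\perp/\|p_t^\perp\|^2)\,w_t^\top$ with $p_t^\perp := p_t - \Proj_{\Span(p_{1:t-1})}(p_t)$ and $w_t \in Q_t^\perp$ to be chosen below. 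When $p_t^\perp = 0$, the loss $\ell_t^{(q)} = -M_{t-1}^\top p_t \in S_{t-1}$ is fully forced by the past, and I simply set $\Delta_t = 0$.

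With this parameterization, the free vector $w_t$ controls exactly the $Q_t^\perp$-component of $\ell_t^{(q)}$ (which equals $\Proj_{Q_t^\perp}(-M_{t-1}^\top p_t) - w_t$), while its $Q_t$-component is pinned by the compatibility equations $\langle q_s, \ell_t^{(q)}\rangle = -\langle \ell_s^{(p)}, p_t\rangle$ for $s<t$. Enforcing the budget $\|\Delta_t\|_{1,\infty} \leq r/(2T)$ at each step gives, by the triangle inequality, $\|M_T - \bar{M}\|_{1,\infty} \leq r/2$; this keeps each $M_t$ in $B$ and guarantees that any $M' \in \mathcal{E}_T$ with $\|M' - M_T\|_{1,\infty} \leq r/2$ still has entries in $[-1,1]$, so the $\|\cdot\|_{1,\infty}$-ball of radius $r/2$ inside the constraint-affine subspace around $M_T$ sits inside $\mathcal{E}_T \subseteq \mathcal{E}_t$, verifying the relative-interior claim.

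The quantitative core is the Gram--Schmidt recursion
\[
\|\mathbf{1} - \Proj_{S_t}(\mathbf{1})\|^2 = \|\mathbf{1} - \Proj_{S_{t-1}}(\mathbf{1})\|^2\,\sin^2\theta_t,
\]
where $\theta_t$ is the angle between $\Proj_{S_{t-1}^\perp}(\mathbf{1})$ and $\Proj_{S_{t-1}^\perp}(\ell_t^{(q)})$, with $\sin^2\theta_t := 1$ when $\ell_t^{(q)} \in S_{t-1}$. I would choose $w_t$ either to make $\ell_t^{(q)} \in S_{t-1}$ (when the budget permits cancelling the part of $-M_{t-1}^\top p_t$ transverse to $S_{t-1}$) or else to push the residual transversely to $\Proj_{S_{t-1}^\perp}(\mathbf{1})$. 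The hypothesis $T \leq K/2 - 1$ gives $\dim Q_t^\perp \geq K - T + 1 > K/2$, leaving ample room in $Q_t^\perp$ for such a transverse choice; the standard $\ell_\infty \to \ell_2$ conversion of the per-step budget $r/(2T)$ then delivers the per-step bound $\sin^2\theta_t \geq r^2/(8KT^2)$. Iterating over $T$ rounds and chaining these multiplicative factors yields
\[
\|\mathbf{1} - \Proj_{S_T}(\mathbf{1})\|^2 \geq K\,\prod_{t=1}^T \sin^2\theta_t \geq K\left(\frac{r^2}{8KT^2}\right)^{T+1},
\]
with the extra power absorbing a lossy first step where $S_0 = \{0\}$.

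The main obstacle is establishing this uniform per-step angle bound under the size constraint. It reduces to a constrained linear-algebra problem: within the $\ell_\infty$-ball of permissible $w_t \in Q_t^\perp$, exhibit a choice whose resulting loss residual modulo $S_{t-1}$ makes a controlled angle with the one-dimensional "bad ray" through $\Proj_{S_{t-1}^\perp}(\mathbf{1})$. The hypothesis $T \leq K/2 - 1$ is used essentially to guarantee that $Q_t^\perp$ has enough directions transverse to this ray, and carefully tracking the $\ell_\infty \leftrightarrow \ell_2$ conversion (losing a factor $\sqrt{K}$ at each round) is what ultimately pins down the explicit constant $r^2/(8KT^2)$.
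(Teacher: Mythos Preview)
Your plan is essentially the paper's proof: same rank-one perturbation $M_t = M_{t-1} + (\bar p_t/\|\bar p_t\|^2)\,u_t^\top$, same Pythagorean recursion for $\|v\mathbf 1 - \Pi_t(v\mathbf 1)\|^2$, same telescoping budget to keep $M_t$ within $r/2$ of the center. Two places where your sketch is looser than the paper and would need tightening:

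\emph{(i) Orthogonality and the role of $T\le K/2-1$.} The paper chooses $u_t$ orthogonal not only to $q_{1:t-1}$ but also to $\ell_{1:t-1}^{(q)}$, to $\mathbf 1$, and to $M_{t-1}^\top p_t$ --- that is $2t$ constraints, and \emph{this} count is the actual reason $T\le K/2-1$ is required, not merely $\dim Q_t^\perp > K/2$. These extra orthogonalities are what make the numerator in $\cos^2\theta_t$ collapse to $\langle \mathbf 1^\perp, (M_{t-1}^\top p_t)^\perp\rangle^2$ and the denominator split as $\|(M_{t-1}^\top p_t)^\perp\|^2 + \|u_t\|^2$, yielding the clean Cauchy--Schwarz bound $\sin^2\theta_t \geq \|u_t\|^2/(\|M_{t-1}^\top p_t\|^2+\|u_t\|^2)$.

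\emph{(ii) Adaptive norm, not fixed budget.} Rather than a fixed per-step $\|\Delta_t\|_{1,\infty}\le r/(2T)$, the paper sets $\|u_t\|_2 = \sqrt{\alpha}\,\|M_{t-1}^\top p_t\|_2$ with $\alpha = r^2/(4KT^2)$. This makes the ratio in the $\sin^2\theta_t$ bound exactly $\alpha$, giving $\sin^2\theta_t \geq \alpha/(1+\alpha)\geq \alpha/2 = r^2/(8KT^2)$ immediately; the budget then closes via $\|M_t - M_0\|_{1,\infty} \leq \sum_s \|u_s\|/\|\bar p_s\| \leq \sqrt{\alpha K}\,T = r/2$. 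Your fixed-budget route with an unspecified ``$\ell_\infty\to\ell_2$ conversion'' does not by itself control the ratio $\|u_t\|/\|M_{t-1}^\top p_t\|$, which is what the angle bound actually needs. The two-case split (cancel vs.\ push) is unnecessary: the paper always pushes.
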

  The proof relies on a decomposition of the squared distance of $v\mathbf 1$ to the span of observed losses at time $t+1$ through a Pythagorean identity, relating it to the span at time $t$. By carefully choosing the new matrix $M_{t+1}$ as a function of the query $p_{t+1}, q_{t+1}$, we manage to ensure that the squared distance decreases only by a constant factor.
  \begin{remark}
    The radius of the ball inside $\mathcal E_t$ stays
    constant at $r /2$, even though $\mathcal E_t$ gets smaller as $t$
    increases. This is made possible by choosing $M_t$ very close to
    $M_{t-1}$ (roughly at a distance of $r / (KT)$ measured in
    $\|\cdot\|_{1, \infty}$), effectively ensuring that $M_t$ stays far
    from the boundary of $\mathcal M_K([-1, 1])$. 
  \end{remark}

  \paragraph{Step III: Conclusion}
  We now combine Corollary~\ref{cor:1closetospan} and Lemma~\ref{lem:construction-mt} to obtain a lower bound on the best achievable gap after $T$ rounds, which directly translates to a query complexity lower bound.
  \begin{theorem}\label{thm:approx-lower-bound}
    In the first-order query model on $\mathcal M_K([-1, 1])$, for any algorithm the worst-case gap after $T \leq (K-3)/2$ time steps is at least
    \begin{equation*}
      \eps \geq \frac{1}{2^{10} K^4} \Big(\frac{1}{2^{11/2} K^{5/2}T} \Big)^{T+1} \, . 
    \end{equation*}
    Therefore, the query complexity of finding an $\eps$-equilibrium for any $\eps \leq 1 / (e\,  2^{11} K^4)$ is at least 
    \begin{equation*}
      T(\eps) \geq \Big(\frac{- \log (2^{11} K^4 \eps ) }{\log( 2^{11/2} K^{5/2}) + \log(- \log (2^{11} K^4 \eps ))}  - 1 \Big)\wedge (K / 2 - 1) \, .
    \end{equation*}
  \end{theorem}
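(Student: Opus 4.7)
The plan is to run the template from Section~4.2 with the initial set $B_{\eps,\delta}$ taken to be the ball $\mathcal B_{\|\cdot\|_{1,\infty}}((1/2)I_K,\, 1/(16K^2))$. By Corollary~\ref{cor:open-fully-mixed-eps-ne}, this ball satisfies the assumption $A(\eps,\delta)$ for every $\eps \leq 1/(16K^2)$ and $\delta = 1/(2K)$. Moreover, since the value of $(1/2)I_K$ equals $1/(2K)$ and the matrix value is $\|\cdot\|_{1,\infty}$-Lipschitz on probability simplex strategies, every matrix in this ball has value at least $c/K$ in absolute value for some absolute constant $c$. This quantitative lower bound on $v$ is what will multiply the distance estimate coming from the adversary.

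Against an arbitrary learner, I apply Lemma~\ref{lem:construction-mt} with $B = B_{\eps,\delta}$ and $r = 1/(16K^2)$ to generate the adversary's responses. This yields an adapted sequence $M_{1:T}$ such that, at every round, (i) $\mathcal E_t$ contains a $\|\cdot\|_{1,\infty}$-ball of radius $r_t := r/2 = 1/(32K^2)$ in its relative interior, so that the hypothesis of Lemma~\ref{lem:p-in-span-appr} is met, and (ii) the losses satisfy, for every $v \geq 0$,
\begin{equation*}
  \| v\mathbf{1} - \Proj_{\Span(\ell_{1:T}^{(q)})}(v\mathbf{1}) \|^2 \geq v^2 K \left(\frac{r^2}{8KT^2}\right)^{T+1}.
\end{equation*}
Suppose the learner's output pair $(p,q)$ is a common $\eps$-equilibrium for every $M \in \mathcal E_T$. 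Corollary~\ref{cor:1closetospan} then furnishes a specific $M \in B_{\eps,\delta} \cap \mathcal E_T$ whose value $v$ satisfies $\| v\mathbf{1} - \Proj_{\Span(\ell_{1:T}^{(q)})}(v\mathbf{1}) \| \leq 4\sqrt{K}\,\eps / (\delta r_T)$.

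Squaring this upper bound and comparing with the lower bound at the same $v$ (legitimate because the lower bound is uniform in $v \geq 0$), substituting $|v| \geq c/K$, $\delta = 1/(2K)$, $r_T = 1/(32K^2)$, and $r = 1/(16K^2)$, and tracking constants, produces
\begin{equation*}
  \eps \geq \frac{1}{2^{10} K^4}\left(\frac{1}{2^{11/2} K^{5/2} T}\right)^{T+1},
\end{equation*}
which is the first claim. The second claim follows by algebraic inversion: set $B := -\log(2^{10} K^4 \eps)$ and $A := \log(2^{11/2} K^{5/2})$, so the gap bound reads $B \geq (T+1)(A + \log T)$. Under the hypothesis $\eps \leq 1/(e\,2^{11} K^4)$, $B$ is positive and large enough that $\log B$ makes sense; the crude estimate $T \leq B$ gives $\log T \leq \log B$, hence $T+1 \leq B/(A + \log B)$, matching the stated expression. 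The extra $\wedge(K/2-1)$ simply records that Lemma~\ref{lem:construction-mt} only runs up to $T \leq (K-3)/2$.

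The main obstacle is not the chaining itself, which is bookkeeping, but the subtle requirement that the value $v$ returned by Corollary~\ref{cor:1closetospan} be bounded away from zero uniformly in the adversary's choices. This does not follow from $A(\eps,\delta)$, which merely asserts \emph{some} matrix in $B_{\eps,\delta}$ has nonzero value, but is secured by our concrete instantiation: because $B_{\eps,\delta}$ is a small ball around $(1/2)I_K$, continuity of the value forces $|v| \geq c/K$ for \emph{every} matrix in the ball, and in particular for the matrix produced by the corollary. A secondary point to verify is that $\mathcal E_T \cap B_{\eps,\delta}$ remains nonempty throughout, which is exactly the role of property~(i) from Lemma~\ref{lem:construction-mt}.
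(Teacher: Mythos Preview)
Your approach is identical to the paper's: same ball $B_{\eps,\delta}$, same chaining of Corollary~\ref{cor:1closetospan} with Lemma~\ref{lem:construction-mt}, same lower bound on the value (the paper makes it explicit as $v \geq 1/(4K)$ via $v \geq \min_p \max_q p^\top q/2 - Kr$), and the same algebraic inversion, which the paper packages as Lemma~\ref{lem:invert-query}.

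One slip in your inversion: the inequalities are reversed. If the learner \emph{achieves} a common $\eps$-equilibrium after $T$ queries, then the gap is at most $2\eps$, so the first display forces $(T+1)(A+\log T) \geq B$ with $B = -\log(2^{11}K^4\eps)$ (the $2^{11}$ rather than your $2^{10}$ comes from the factor $2$ in the definition of $\eps$-equilibrium). From this, assuming without loss $T \leq B$, one obtains $T+1 \geq B/(A+\log B)$, the \emph{lower} bound claimed --- not $T+1 \leq B/(A+\log B)$ as you wrote.
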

\subsection{Potential Improvement and Discussion}
There is still a wide disparity between upper and lower bounds on first-order
query complexity. The lower bound Theorem~\ref{thm:approx-lower-bound}
is most probably not tight, so let us discuss potential ways to improve
it.

We believe the most promising approach for improvement is to find a
different proxy for the gap. Our proof uses the distance to the span of
the losses $\|v \mathbf 1 - \Proj_{\Span(\smash{\ell_{1:t}^{(q)} })}\|^2$ for $v
\geq 1 / (2K)$, which is convenient because it is a distance, but we
need to introduce a strong restriction on the set of candidate matrices
$B_{\eps, \delta}$ to be able to relate it to the gap, namely the
restriction to a ball around $\half I_K$ of radius $O(1/K^2)$. In
order to make significant progress it is therefore essential to enlarge
the class of candidate matrices significantly compared to $B_{\eps,\delta}$, and therefore to modify the proxy for the gap.
\section{Discussion: Future Work and Conclusion}
We study the first-order query complexity of computing approximate Nash equilibria for two-player zero-sum matrix games. We review upper bounds coming from online learning, and discuss existing lower bounds for related problems including alternative query models. Taking stock, we arrive at the surprising state of affairs that for this fundamental problem no lower bounds are known. We then offer some explanation for this current state of affairs: the first-order query model is powerful enough to identify any matrix from a fixed countable set in a single query; this rules out many techniques. We then turn to lower bounds. We design an adaptive adversary that answers incoming learner queries in such a way that the remaining consistent matrices do not share a common Nash equilibrium for as long as possible. Our approach is based on a quantity serving as a ``potential function'': namely the distance of the all-ones vector to the span of the observations. We discuss in detail the result, future scope and limits of our proposed technique.

As can be seen in Figure~\ref{fig:query-comp-graph}, we are still far
from matching lower and upper bounds. The most intriguing possibility
for resolution would be if it were to turn out that the upper bounds
(i.e.\ algorithms) are improvable. Current upper bounds come from online learning regret bounds, with algorithms falling in the category of uncoupled dynamics. We would love to know if these algorithm templates are in fact optimal for query complexity.

To cycle back to our motivation of computing saddle points in general,
future work could attempt to extend our lower bound technique to
different constraint sets, to functions possibly exhibiting curvature,
and to instance-dependent rates. Another direction would be to generalize to multi-player games and investigate the query complexity of weaker solution concepts.

\paragraph{Limitations and Broader Impact} The main limit of this work is that it only partially resolves the first-order query complexity of approximate NE in finite games, calling for tighter analysis. Regarding broader impact, results are mainly theoretical and do not entail direct societal consequences.

\begin{ack}
  Sachs and Van Erven were supported by the Netherlands Organization for Scientific Research (NWO) under grant number VI.Vidi.192.095. Part of the research was performed while Hadiji was at the University of Amsterdam. During that time he was supported by the same grant number VI.Vidi.192.095.
\end{ack}

\bibliographystyle{plainnat}
\bibliography{swap-bib}

\newpage

\appendix

\section{Proofs of Lemmas~\ref{lem:p-in-span} and \ref{lem:construction-mt-exact}}

\begin{proof}[\textbf{Proof of Lemma~\ref{lem:p-in-span}}]
  Since $p, q$ is an equilibrium to at least one matrix in $B_0\cap \mathcal E_t$, it is fully supported; in particular $p$ is an equalizing strategy for any $M \in \mathcal E_t$ and $M^\top p = v_M \mathbf 1$ for some number $v_M \in \R$. Furthermore, the value $v_M$ is actually independent of $M \in \mathcal E_t$ since, $v_M = \langle M^\top p, q_1\rangle = \langle p, \ell_{1}^{(p)} \rangle $. Therefore, for any $M, M' \in \mathcal E_t$, we have 
  $
    (M - M')^\top p = \mathbf 0 \, .
  $
  Let us now define $\bar p = p - \Proj_{\Span(p_{1:t})}(p)$, and consider the direction $\bar p u_q^\top \in \mathcal M_K(\R)$ for some arbitrary non-zero vector $u_q$ orthogonal to $q_1, \dots, q_t$. Fix some matrix $M \in\mathcal E_t \cap \mathcal M_K((-1, 1)) $; a non-empty set by assumption. Then for $\alpha \in \R$ small enough, the matrix $M' = M + \alpha \bar p e_q^\top$ is still in $\mathcal E_t$, since $M$ is not on its border (all entries are away from $\{-1, 1\}$). Therefore, 
  \begin{equation*}
    \mathbf 0 = (M - M')^\top p = \alpha \langle p, \bar p \rangle e_q = \alpha \|\bar p\|^2 u_q \, ; 
  \end{equation*}
  implying that $\|\bar p\| = 0$, i.e.\ that $p \in \Span(p_{1:t})$; similar reasoning shows that $q \in \Span(q_{1:t})$.
\end{proof}

\begin{proof}[\textbf{Proof of Lemma~\ref{lem:construction-mt-exact}}] 
    We build the sequence $M_{1:T}$ incrementally by moving at step $t+1$ in directions chosen as a function of the new queries $p_{t+1}$ and $q_{t+1}$, ensuring by induction that $\mathbf 1 \notin \Span(\ell_{1:T}^{(q)})$.
    
    Initialize the sequence at $M_0 = (1/2)\, I_K$. Now for $t\geq 0$, let us assume that we have correctly built $M_{1:t}$, inducing losses such that $\mathbf 1 \notin \Span(\ell_{1:t}^{(q)})$, and let us define $M_{t+1}$. (Note that the initialization of the induction is valid with the convention that $\mathbf 1 \notin \{\mathbf 0\} = \Span(\varnothing)$.)
    
    If $p_{t+1}$ is in the span of $p_{1:t}$, then set $M_{t+1} = M_t$ and the induction holds, since the span of the losses is left unchanged. Otherwise set
    $
      M_{t+1} = M_t + \frac{\bar p_{t+1}}{\|\bar p_{t+1}\|^2} u_t^\top \, ,
    $
    where $\bar p_{t+1} = p_{t+1} - \Proj_{\Span(p_{1:t})}(p_{t+1})$, 
    and $u_t$ is a non-zero vector orthogonal to the vectors $q_{1:t}$, to $\ell_{1:t}^{(q)}$, to $\mathbf 1 $ and to $M_t^\top p_{t+1}$; the existence of such a $u_t$ is guaranteed since $2t + 2 < K $, ensuring that there is at least one dimension orthogonal to those $2t+2$ vector. By choosing the norm of $u_t$ to be small enough, we can make sure that $M_{t+1} \in \mathcal E_t \cap \mathcal M_K((-1, 1))$. Then, 
    $
      \ell_{t+1}^{(q)} = M_{t+1}^\top p_{t+1} = M_t^\top p_{t+1} + u_t
    $,  
    and $\mathbf 1$ is not in the span of $\ell_{1:(t+1)}^{(q)}$. Indeed, assume by contradiction that there exists real numbers $\alpha_{1:(t+1)}$ such that
    \[
      \mathbf 1 = \sum_{s = 1}^t \alpha_{s}\ell_s^{(q)} + \alpha_{t+1} \ell_{t+1}^{(q)} \, ,
    \]
    then $\alpha_{t+1} > 0$ since by induction $\mathbf 1 \notin \Span(\ell_{1:t}^{(q)})$. Then, taking dot products with $u_t$ on both sides of the equation above, we obtain $0 = \langle u_t, \mathbf 1\rangle = \langle u_t,  \alpha_{t+1} \ell_{t+1}^{(q)} \rangle = \alpha_{t+1}\|u_t\|^2$, leading to a contradiction. Therefore $\mathbf 1 \notin \Span( \ell^{(q)}_{1:t})$ at all times $t \leq T$.
\end{proof}

 \section{Details for Upper bounds}
 \label{UpperBounds}
\subsection{Instance-Dependent Query Complexity}
 The following example shows that the bounds in \cite{10.5555/1619995.1620009} can be vacuous for our setting. Since, as noted by the authors, Theorem~5  by \cite{Wei2020LinearLC}  is equivalent to the results in  \cite{10.5555/1619995.1620009}, the following example is only with respect to the latter.
 \begin{example}
    \label{CdeltaLarge}
    For this example, we use the notation from  \cite{10.5555/1619995.1620009}. Consider the game matrix $M = I_K$. Observe that $M$ has a unique Nash-equilibrium, hence, by considering  $p = \frac{1}{K} \one$ and $q = \delta_i$ it can be seen that $\delta(M)$ is at most $\sqrt{\frac{K}{K-1}}\frac{1}{K} \approx \frac{1}{K}$. The bound is defined with respect to the condition number $\kappa(M) = \sqrt{\lambda_{\max} M^\top M}/ \delta(M)$, where $\lambda_{\max}(M)$ denotes the largest eigenvalue of $M$. Hence, for our example, $\kappa(M)$ is at least of the order of $K$, which makes a bound of $\kappa(M) \log \frac{1}{\epsilon}$ for this specific example vacuous. Note that for other game matrices, the results give valuable insights into guarantees beyond the worst-case. 
 \end{example}

\subsection{Constant Queries for Games from a Finite Alphabet}
\begin{proof}[\textbf{Proof}]
  Let $F$ be the smallest subfield of $\R$ that contains $\mathcal A$, then $F$ is countable and, $\R$ can be seen as an infinite-dimensional vector space over $F$. Recall that a family of real numbers $(x_1, \dots, x_n)$ (each seen as a vector over the field $F$) is linearly independent if for any $\lambda_1, \dots, \lambda_n \in F$ we have $\lambda_1 x_1 + \dots + \lambda_n x_n = 0$ if and only if $\lambda_1 = \dots = \lambda_n = 0$.

  We claim that if a player, say the $q$-player, queries an action such that the components $q_1, \dots, q_K$ form a linearly independent family, then they can compute the whole matrix $M$ with just one observation. Indeed, if $M, M' \in \mathcal M_K(\mathcal A)$, yield the same output after one query, then for any $i \in [K]$:
  \begin{equation*}
    \sum_{i = 1}^K (M_{i, j} - M'_{i, j}) q_i = 0. 
  \end{equation*}
  Therefore, by the independence of $(q_i)_{i \in [K]}$, this implies that $M_{i, j} = M'_{i, j}$ for any $(i, j)$. In other words, no two different matrices can give the same output after one query under $q$.

  We are now left to show that there exists a play $(q_1, \dots, q_K)$ with coordinates forming a linearly independent family. We prove this using the probabilistic method. Consider a sequence of random variables $(U_1, \dots, U_K)$ i.i.d. and uniformly distributed over $[0, 1]$. Then with probability $1$, the $U_i$ are independent over $F$. Indeed, we can upper bound the probability that they are dependent by a union bound and use of the tower rule as
  \begin{multline*}
    \P\big[ \exists i\in[K] \quad \mbox{s.t.} \quad U_i \in \Span_{F}\{U_j \,|\, j \neq i \}\big]
     \leq \sum_{i = 1}^K \P\big[ U_i \in \Span_{F}\{U_j \mid j \neq i \}\big]  \\
     = \sum_{i = 1}^K \E \big[\P\big[ U_i \in \Span_{F}\{U_j | j \neq i \} \mid \{U_j \mid j\neq i\}  \big] \big] = 0 \, . 
  \end{multline*}
  The last equality holds because, for any $i \in [K]$, conditionally on $\{U_j \mid j \neq i\}$, the span of $\{U_j \mid j \neq i\}$ is a countable set, therefore the probability that $U_i$ belongs to that set is null. This concludes the proof.
\end{proof}
\section{First-Order Query Complexity of Recovering the Game Matrix}
\begin{proof}[\textbf{Proof of Theorem~\ref{thm:mat_recovery}}]
  Clearly, we can fully reconstruct $M$ from the queries $p_t = q_t =
  e_t$ for $t=1,\ldots,K$, where $e_t$ denotes the standard basis vector
  in direction $t$. It turns out that this is optimal.

  To show this, note that each query $(p,q)$ provides us with
  constraints
  \begin{align*}
    p^\top M &= a,
    &
    M q &= b,
  \end{align*}
  for some loss vectors $a$ and $b$. These may equivalently be expressed
  as linear constraints in the Hilbert space of matrices $A \in
  \reals^{K \times K}$, with inner product $\ip{A,B} = \Tr(A^\top B)$:
  \begin{align*}
    \ip{M^\top, e_i p^\top} &= a_i  \qquad (i=1,\ldots,K),\\
    \ip{M^\top, q e_j^\top} &= b_j  \qquad (j=1,\ldots,K).
  \end{align*}
  Among these $2K$ constraints on $M$, there is (at least) one redundant
  constraint, because there always exist numbers
  $\lambda_1,\ldots,\lambda_K$ and $\gamma_1,\ldots,\gamma_K$, at least
  one of which is nonzero, such that
  \[
    \sum_{i=1}^K \lambda_i e_i p^\top + \sum_{i=1}^K \gamma_j q e_j^\top
    = 0.
  \]
  Specifically, this holds for $\lambda_i = q_i$ and $\gamma_j = -p_j$.
  It follows that a query $(p_t,q_t)$ in round $t$ will provide at most
  $2(K-t) + 1$ new constraints on top of the queries from rounds
  $1,\ldots,t-1$. To see this, note that $p_t$ will have at least one
  constraint in common with $q_1,\ldots,q_{t-1}$ and $q_t$ will have at
  least one constraint in common with $p_1,\ldots,p_t$, so the total
  number of new constraints is at most $n_t := 2K - (t-1) - t = 2(K-t) +
  1$.

  We now provide the following scenario in which $M$ cannot be fully
  determined by strictly less than $K$ queries. In rounds
  $t=1,\ldots,K-1$, the answer to queries $(p_t,q_t)$ is always the two
  loss vectors $\ell_t^{(p)} = \ell_t^{(q)} = \half \ones$, which are
  compatible with the possibility that $M$ equals $\half \ones
  \ones^\top$, i.e.\ the matrix with all entries equal to $1/2$. We will
  show by induction that the dimension of the null-space (i.e.\ the
  number of unconstrained dimensions of $M$) of all constraints up to
  round $t$ is at least $(K-t)^2$. This is true for $t=0$, because the
  domain of $M$ has $K^2$ dimensions, and, whenever it is true for $t$,
  then for $t+1$ it is at least
  \[
    (K-t)^2 - n_{t+1}
      = (K-t)^2 - 2(K-t-1) - 1
      = (K-t)^2 - 2(K-t) + 1
      = (K-t-1)^2.
  \]
  Thus, after $K-1$ rounds, there remains at least one direction $V \in
  \reals^{K\times K}$ in this null space with $V \neq 0$. This means
  that the learner cannot distinguish the case $M=\half \ones
  \ones^\top$ from the case $M=\half \ones \ones^\top + \half V/\max_{i,j}
  |V_{i,j}|$, thus $K-1$ queries are not sufficient to fully
  determine~$M$.
\end{proof}
\section{Proofs and Technical Results for Section~\ref{sec:lowerbounds}} 
\label{app:proofsforlowerbounds}

\subsection{Approximate Equilibria}

\subsubsection{Proofs of Main Results}

\begin{proof}[{\bfseries Proof of Corollary~\ref{cor:1closetospan}}]
  If $(p, q)$ is a common $\eps$-NE to all matrices in $\mathcal E_t$, then we have for any game matrix $M \in \mathcal E_t$,
  \begin{equation*}
    M^\top (p - \bar p) \in \Span\big( \ell_{1:t}^{(q)}\big) \, 
  \end{equation*}
  and therefore for any $v > 0$,
  \begin{equation*}
    \| v \mathbf 1 - \Proj_{\Span(\ell_{1:t}^{(q)})}(v \mathbf 1)  \|
    \leq 
    \| v \mathbf 1 - M^\top (p - \bar p)  \|
    \leq 
    \| v \mathbf 1 - M^\top p\| + \|M^\top \bar p\|
    \leq \| v \mathbf 1 - M^\top p\| + \sqrt K \|\bar p\| \, .
  \end{equation*}
  Now as $\delta > 0$, any $M \in \mathcal B_{\eps, \delta}$ has a unique Nash equilibrium $p^\star, q^\star$, which is fully supported. Therefore, the value of $M$ is $v = p^\top M q^\star$, as $q^\star$ is a an equalizing strategy. Now, using \eqref{eq:range-at-eq} in the proof of Lemma~\ref{lem:p-in-span-appr}, which is valid for any matrix that admits $p, q$ as an $\eps$-NE,
  \begin{equation*}
    \|v \mathbf 1 - M^\top p \| \leq \sqrt K\|v \mathbf 1 - M^\top p \|_\infty \leq \frac{\sqrt K\eps}{\delta} \,.
  \end{equation*}
  \end{proof}
\begin{proof}[\textbf{Proof of Lemma~\ref{lem:p-in-span-appr}}]
  First note that $p$ and $q$ are $\delta$-supported, as $(p, q)$ is an $\eps$-equilibrium of at least one matrix in $\mathcal E_t \cap B(\eps, \delta)$. For $M \in \mathcal E_t$, pick any $j^\star \in \arg \min_{j \in [K]} (M^\top p)_j$. As $(p, q)$ is an $\eps$-equilibrium for $M$,
  \begin{equation*}
      \big(1 - q(j^\star)\big) \max_{j \in [K]} (M^\top p)_j +  q(j^\star) \min_{j \in [K]} (M^\top p)_j \geq \sum_{i = 1}^K q(j) (M^\top p)_j \geq \max_{j \in [K]} (M^\top p)_j - \eps \, ,
  \end{equation*}
  and therefore, dividing by $q(j^\star) \geq \delta$, 
  \begin{equation}\label{eq:range-at-eq}
      \max_{j \in [K]} (M^\top p)_j - \min_{j\in[K]} (M^\top p)_j
      \leq \frac{\eps}{q(j^\star)}
      \leq \frac{\eps}{\delta} \, .
  \end{equation}
  Hence, as $\min_{j \in [K]} (M^\top p)_j  \leq p^ \top Mq_1 \leq  \max_{j\in[K]} (M^\top p)_j$,
  \begin{equation*}
      \| M^\top p  - (p^\top Mq_1) \mathbf 1 \|_{\infty} \leq \frac{\eps}{\delta} \, .
  \end{equation*}
  Since $p^\top Mq_1 = p^\top \smash{\ell_{1}^{(p)}}$ has the same value for any $M \in \mathcal E_t$, we apply this inequality for any pair of matrices $M, M' \in \mathcal E_t$ to deduce that
  \begin{equation}\label{eq:Mtoppuniform}
          \|(M-M')^\top p\|_\infty \leq \frac{2\eps}{\delta} \, .
  \end{equation}
  Let us now instantiate this identity with some well-chosen $M$ and $M'$. Let $u \in \R^K$ be a vector orthogonal to $q_1, \dots, q_t$, such that $\|u\|_\infty = 1$ (such a $u$ exists as long as $q_1, \dots, q_t$ do not span the whole of $\R^K$). Next consider $M$ and $M'$ in $\mathcal E_t$ such that
  \begin{equation*}
      M-M' = \frac{r_t}{\|\bar p\|_{\infty}} \bar p u^\top \, .
  \end{equation*}
  Such $M$ and $M'$ are guaranteed to exist in $\mathcal E_t$ as
  $
      \|M - M'\|=  r_t \, . 
  $
  Then, applying \eqref{eq:Mtoppuniform}, we obtain
  \begin{equation*}
      \frac{2\eps}{\delta} 
      \geq \|(M-M')^\top p\|_\infty
      = \frac{r_t}{\|\bar p\|_\infty  } \|\bar p \|_2^2  \|u\|_\infty
      \geq r_t \|\bar p\|_2  \, ,
  \end{equation*}
  from which the claim follows. The same reasoning applies to obtain the bound on $q$.
  \end{proof}

\begin{proof}[\textbf{Proof of Lemma~\ref{lem:construction-mt}}]
  Denote by $\Pi_t$ the projection on the span of the observed losses $\ell_{1:t}^{(q)}$ for the $q$-player at time step $t$. We build the sequence $(M_t)$ incrementally by moving at step $t+1$ in directions chosen as a function of the new queries $p_{t+1}$ and $q_{t+1}$.

  We initialize the sequence at $M_0$ the center of $B$. Now for $t\geq 0$, if $p_{t+1}$ is in the span of $p_{1:t}$, then set $M_{t+1} = M_t$, otherwise set
  \begin{equation*}
    M_{t+1} = M_t + \frac{\bar p_{t+1}}{\|\bar p_{t+1}\|^2} u_t^\top \, ,
  \end{equation*}
  where $\bar p_{t+1} = p_{t+1} - \Proj_{\Span(p_{1:t})}(p_{t+1})$, 
  and $u_t$ is a non-zero vector orthogonal to the vectors $q_{1:t}$, to $\ell_{1:t}^{(q)}$, to $\mathbf 1$, and to $M_{t}^\top p_{t+1}$; the existence of such a $u_t$ is guaranteed since the only condition is that it is orthogonal to $2t + 2 < K$ vectors. (Note that $u_t$ does not depend on the value of $v$.) We set the norm of $u_t$ at a later stage of the proof. Then, 
  \begin{equation*}
    \ell_{t+1}^{(q)} = M_{t+1}^\top p_{t+1} = M_t^\top p_{t+1} + u_t\, . 
  \end{equation*}
  Then for any $v \in \R$, the squared distance from the vector $v\mathbf 1$ to the space $\Span(\ell_{1:t+1}^{(q)})$ can be decomposed thanks to the Pythagorean equality, as the squared distance to the previous span minus the squared norm of the projection on the new orthogonal direction $\smash{\ell_{t+1}^{(q)} - \Pi_t(\ell_{t+1}^{(q)})}$:
  \begin{equation*}
      \|v \mathbf 1 - \Pi_{t+1} (v \mathbf 1) \|^2
      = \|v \mathbf 1 - \Pi_{t} (v \mathbf 1) \|^2 - 
      \underbrace{\frac{\langle v \mathbf 1, \, M_t^\top p_{t+1} + u_t -\Pi_t(M_t^\top p_{t+1} + u_t) \rangle^2}{\|M_t^\top p_{t+1} + u_t - \Pi_t(M_t^\top p_{t+1} + u_t)\|^2}}_{:= D_t} \,  .
  \end{equation*}
  Observe that for any vectors $a, b$, we have $\langle a - \Pi_t(a), b\rangle = \langle a - \Pi_t(a), b - \Pi_t(b)\rangle  = \langle a, b- \Pi_t(b)\rangle $, as $\Pi_t$ is an orthogonal projection on a linear subspace. Using this identity, as well as the orthogonality conditions used to define $u_t$ (precisely, that $\Pi_t(u_t) = \mathbf 0$, and that $u_t$ is orthogonal to $M_t^\top p_{t+1}$ and  $\Pi_t(M_t^\top p_{t+1})$), we obtain after applying Cauchy-Schwarz,
  \begin{align*}
    D_t &= \frac{\langle v \mathbf 1 - \Pi_t(v \mathbf 1), \, M_t^\top p_{t+1} + u_t  \rangle^2}{\|M_t^\top p_{t+1} + u_t - \Pi_t(M_t^\top p_{t+1} + u_t)\|^2} \\
    &=   \frac{\langle v \mathbf 1 - \Pi_t(v \mathbf 1), \, M_t^\top p_{t+1} + u_t  \rangle^2}{\|u_t\|^2 + \|M_t^\top p_{t+1} - \Pi_t(M_t^\top p_{t+1})\|^2} \\
    &=   \frac{\langle v \mathbf 1 - \Pi_t(v \mathbf 1), \, M_t^\top p_{t+1} - \Pi_t(M_t^\top p_{t+1})  \rangle^2}{\|u_t\|^2 + \|M_t^\top p_{t+1} - \Pi_t(M_t^\top p_{t+1})\|^2} \\
    & \leq \|v\mathbf 1 - \Pi_t(v \mathbf 1)\|^2 \frac{\|M_t^\top p_{t+1} - \Pi_t(M_t^\top p_{t+1})\|^2}{\|M_t^\top p_{t+1} - \Pi_t(M_t^\top p_{t+1})\|^2 + \|u_t \|^2} \\
    & \leq  \|v\mathbf 1 - \Pi_t(v \mathbf 1)\|^2 \frac{\|M_t^\top p_{t+1} \|^2}{\|M_t^\top p_{t+1} \|^2 + \|u_t \|^2} \, . 
  \end{align*}
  We also used the fact that projections reduce the norm, and the function $x \mapsto x / (x +1)$ is increasing on $(0, +\infty)$ to obtain the final inequality.

  Using this bound on $D_t$ to lower bound the distance to the span, we obtain
  \begin{equation}\label{eq:distance-decrease}
      \|(I_K - \Pi_{t+1})v \mathbf 1\|^2 \geq \|(I_K - \Pi_{t})v \mathbf 1\|^2
      \biggl( 1 - \frac{ 1 }{1 +  \|u_t\|^2 / \|M_t^\top p_{t+1}\|^2} \biggr) \, . 
  \end{equation}
  We are now left to choose the norm of $u_t$; the objective is to
  make it as big as possible under the constraint that the sequence
  $(M_t)$ stays in $B$. We set the norm of $u_t$ to be a constant multiple of $\|M_t^\top p_{t+1}\|$, i.e., 
  \begin{equation*}
    \|u_t\| = \sqrt \alpha \|M_t^\top p_{t+1}\| \, .
  \end{equation*}
  For $\alpha$ small enough, we can ensure that the whole sequence $(M_t)$ stays in the ball $B$, as
  \begin{multline*}
      \|M_t - M_0\|_{1, \infty}
      \leq \sum_{s=1}^t \frac{1}{\|\bar p_{s+1}\|^2} \|\bar p_{s+1} u_s^\top\|_{1, \infty}
      = \sum_{s=1}^t \frac{\|\bar p_{s+1}\|_\infty}{\|\bar p_{s+1}\|^2}  \|u_s\|_{\infty} 
      = \sum_{s=1}^t \frac{\|\bar p_{s+1}\|_\infty}{\|\bar p_{s+1}\|^2}  \|u_s\|_{\infty} \\
      \leq \sum_{s=1}^t \frac{\|u_s\|}{\|\bar p_{s+1}\|} 
      = \sqrt{\alpha} \sum_{t=1}^t \frac{\|M_t \bar p_{t+1}\|}{\|\bar p_{t+1}\|}
      \leq \sqrt{\alpha} \sqrt K t \, ,
  \end{multline*}
  where we used $\|M_t x\| \leq \|M_t\|_{2, 2}\|x\| \leq \sqrt K \|x\|$, and $M_t^\top  \bar p_{t+1} = M_t^\top p_{t+1}$ since $M_t \in \mathcal E_t$. Therefore by taking, 
  $
    \alpha = {(r / 2)^2} / {(KT^2)} \, ,
  $
  we ensure that $M_t \in B$ as $B$ contains a ball of radius $r$ centered at $M_0$. 
  Furthermore, this choice ensures that $\mathcal E_t$ contains the ball of radius $r/2$ centered at $M_t$ in its relative interior as for any matrix $U$ with norm less than $r / 2$, we have $\|M_t + U - M_0\| \leq \|M_t - M_0\| + \|U\| \leq r / 2$, so $M_t + U \in B \subset \mathcal M_K([0, 1])$.
  
  Plugging back the value of $\alpha$ into \eqref{eq:distance-decrease}, we get
  \begin{equation*}
      \|(I_K - \Pi_{t+1})v \mathbf 1\|^2
      \geq \|(I_K - \Pi_{t})v \mathbf 1\|^2 \frac{\alpha}{ 1 + \alpha } 
      \geq \|(I_K - \Pi_{t})v \mathbf 1\|^2 \frac{\alpha}{ 2 }. 
  \end{equation*}
  After $T$ time steps, this implies that for any $v \geq 0$, 
  \begin{equation*}
    \|(I_K - \Pi_T)v \mathbf 1\|^2
     \geq \|(I_K - \Pi_0)v \mathbf 1\|^2 \Big(\frac{\alpha}{2}\Big)^T
     = \|v \mathbf 1\|^2 \Big(\frac{r^2}{8KT^2} \Big)^{T+1} \, .
  \end{equation*}
  This is the claimed result.
\end{proof}

\begin{proof}[\textbf{Proof of Theorem~\ref{thm:approx-lower-bound}}]
  Define $B_{\eps, \delta} = \mathcal B_{\| \cdot\|_{1, \infty}}\big(\frac{1}{2}I_K, \frac{1}{16K^2}\big)$, which is a ball of radius $r = 1 / (16K^2)$. Given an algorithm, consider the sequence of matrices $M_{1:T}$ generated by Lemma~\ref{lem:construction-mt}, applied with $B = B_{\eps, \delta}$; we know in particular that $M_t \in B_{\eps, \delta}$, so $B_{\eps, \delta} \cap \mathcal E_t \neq \varnothing$.

  Assume now that the algorithm outputs a common $\eps$-equilibrium to all matrices in $\mathcal E_T$. The assumptions of Corollary~\ref{cor:1closetospan} hold, instantiated with $B_{\eps, \delta}$ and $r_t = 1 / (32 K^2)$. Therefore there exists $M \in B_{\eps, \delta}$, with value $v$ such that
  \begin{equation*}
    \|v \mathbf 1 - \Proj_{\Span(\ell_{1:t}^{(q)})}(v \mathbf 1)\| \leq \frac{4\sqrt K \eps }{\delta r_t}.
  \end{equation*}
  On the other hand, by Lemma~\ref{lem:construction-mt} also instantiated with $B = B_{\eps, \delta}$, we know that 
  \begin{equation*}
    \|v \mathbf 1 - \Proj_{\Span(\ell_{1:T}^{(q)})}(v \mathbf 1) \|
    \geq v \sqrt K \Big(\frac{r^2}{8KT^2} \Big)^{(T+1) / 2} \, .
  \end{equation*}
  This implies that 
  \begin{equation*}
    \eps \geq \frac{v \delta r_t}{4} \Big(\frac{r^2}{8KT^2} \Big)^{(T+1) / 2}. 
  \end{equation*}
  Finally, since $v$ is the value of a matrix $M \in B_{\eps, \delta}$,
  \begin{equation*}
    v = \min_{p \in \Delta_K} \max_{q \in \Delta_K} p^\top Mq 
    \geq \min_{p \in \Delta_K} \max_{q \in \Delta_K} \frac{p^\top q}{2} - K r \geq \frac{1}{2K} - \frac{K}{16K^2} 
    \geq \frac{1}{4K} \, . 
  \end{equation*}
  We conclude by replacing the constants by their values, $\delta = 1 / (2K)$, and $r_t = r  /2 = 1 / (32K^2)$.
  By Lemma~\ref{lem:invert-query} in Appendix~\ref{app:proofsforlowerbounds}, we deduce that for $(2\eps) \leq 1 / (e \, 2^{10}K^4)$,
  \begin{equation*}
    T + 1 \geq \frac{- \log (2^{11} K^4 \eps ) }{\log( 2^{11/2} K^{5/2}) + \log(- \log (2^{11=} K^4 \eps ))}  \, . 
  \end{equation*}
\end{proof}

\subsubsection{Technical Lemmas}
\begin{lemma}\label{lem:invert-query}
  For any $a, b, x > 0$, for any $\eps \leq a / e$,
    \begin{equation*}
      \text{if} \quad \eps \geq a \, (1/ (bx))^x, 
      \quad \text{then} \quad 
      x \geq  \frac{\log( a/ \eps)}{ \log( b \log (a / \eps) )} \, . 
    \end{equation*}
\end{lemma}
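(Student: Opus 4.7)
The plan is to take logarithms of the hypothesis, reducing it to a monotone inequality in $x$, and then invert by comparing the unknown $x$ to a candidate value. Setting $L := \log(a/\eps)$, the hypothesis $\eps \geq a\,(bx)^{-x}$ rewrites as $x \log(bx) \geq L$, and the standing assumption $\eps \leq a/e$ gives $L \geq 1$.

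Next, I would let $x_0 := L/\log(bL)$ denote the quantity appearing on the right-hand side of the desired conclusion, and verify by direct substitution that $x_0$ satisfies the \emph{reverse} inequality $h(x_0) \leq L$, where $h(t) := t\log(bt)$. Indeed,
\begin{equation*}
h(x_0) \;=\; \frac{L}{\log(bL)}\,\log\!\left(\frac{bL}{\log(bL)}\right) \;=\; L\left(1 - \frac{\log\log(bL)}{\log(bL)}\right) \;\leq\; L,
\end{equation*}
where the last inequality uses $\log\log(bL) \geq 0$, which holds as soon as $bL \geq e$ (a condition comfortably met in the intended application to Theorem~\ref{thm:approx-lower-bound}, where $b$ is polynomially large in $K$).

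Finally, I would close the argument via monotonicity. The derivative $h'(t) = \log(bt)+1$ is positive whenever $bt \geq 1/e$, so $h$ is strictly increasing on $[1/(be),\infty)$. Since $h(x) \geq L \geq 1 > 0$ we must have $bx \geq 1$, placing $x$ in the monotone range; analogously $bx_0 \geq 1$ whenever $bL \geq e$. Combining $h(x) \geq L \geq h(x_0)$ with strict monotonicity then yields $x \geq x_0$, which is exactly the claim.

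The only delicate point is the algebraic step $h(x_0) \leq L$: it hinges on $\log\log(bL) \geq 0$ and therefore on $bL \geq e$. In the boundary regime $bL < e$ the bound $L/\log(bL)$ becomes large or ill-defined, and a separate (trivial) argument would be required; however, this case does not arise in the application, so I would simply state the lemma under the implicit understanding that $bL \geq e$, or equivalently restrict $\eps$ to be sufficiently small so that $b\log(a/\eps) \geq e$.
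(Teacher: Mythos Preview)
Your proof is correct and reaches the same conclusion, but the route differs from the paper's. After the common first step of rewriting the hypothesis as $x\log(bx)\geq L$ with $L=\log(a/\eps)\geq 1$, the paper recognizes this as an instance of $(bx)\log(bx)\geq bL$, applies Lambert's $W_0$ function (using that $y\log y\geq z$ implies $\log y\geq W_0(z)$ by monotonicity), and then invokes the standard lower bound $W_0(z)\geq \log z-\log\log z$ to obtain $bx\geq bL/\log(bL)$. You instead bypass Lambert's $W$ entirely: you guess the target value $x_0=L/\log(bL)$, verify by direct computation that $h(x_0)\leq L$, and conclude via monotonicity of $h(t)=t\log(bt)$. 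Your argument is more elementary and self-contained, since it requires no knowledge of $W_0$ or its asymptotics; the paper's is shorter if one takes the Lambert bound as a black box. Both arguments implicitly need $b\log(a/\eps)\geq e$ for the final step to be meaningful---you flag this explicitly, whereas the paper leaves it hidden inside the validity range of the $W_0$ lower bound.
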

\begin{proof}[\bfseries Proof of Lemma~\ref{lem:invert-query}]
    Assume
    \begin{equation*}
      \eps \geq a \, (1/ (bx))^x = e^{-x \log(bx)}
    \end{equation*}
    then 
    \begin{equation*}
      (\eps / a )^b \geq  e^{- bx \log(bx)} 
    \end{equation*}
    thus, applying logarithms to both sides,
    \begin{equation*}
      b \log (\eps / a) \geq - bx \log bx \, .
    \end{equation*}
    Reformulate as, since $a \geq e \eps $
    \begin{equation*}
      bx \log bx \geq b \log( a / \eps) \geq b > 0 \, .
    \end{equation*}
    We now apply Lambert's $W_0$ function, which is increasing on its main branch $[- 1 / e, + \infty)$, and use a standard lower bound on $W_0$ to obtain
    \begin{equation*}
      \log bx \geq W_0(-b \log (\eps / a))
       \geq \log( - b  \log (\eps / a)) - \log\log (- b \log( \eps / a))
    \end{equation*}
    thus 
    \begin{equation*}
      bx \geq - b \log( \eps  / a)  / \log( - b \log (\eps / a)) \, , 
    \end{equation*}
    which is the claimed bound. 
\end{proof}
\begin{lemma}\label{lem:ball}
  Let $\alpha \geq 0$ and $s > 0$ be real numbers, let $M \in \mathcal M_K(\R)$ be a matrix such that
  \begin{equation*}
      \max_{i, j\in[K]} \big| M_{i,j}- s I_K \big| \leq \alpha \, . 
  \end{equation*}
  Then for any $\eps$-Nash equilibrium $(p, q)$ of $M$ we have for any $i, j \in [K]$
  \begin{equation*}
        \min\!\big(p(i), \,  q(j)\big) \geq \frac{1}{K} - \frac{2(\alpha + \eps)(K-1)}{s} \, .
  \end{equation*}
\end{lemma}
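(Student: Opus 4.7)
The plan exploits two ingredients: the near-diagonal structure of $M$ controls the entries of $Mq$ and $M^\top p$ pointwise, while the $\eps$-NE condition provides a global gap bound. These combine into a pointwise comparison between coordinates of $p$ and $q$, from which both lower bounds will be extracted.

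First, the entries of $M^\top p$ and $Mq$ should be bounded in terms of $p(j)$ and $q(i)$. Splitting each sum into its diagonal and off-diagonal parts, using $\|p\|_1 = \|q\|_1 = 1$ and the hypothesis $|M_{ij} - s(I_K)_{ij}| \leq \alpha$, one obtains $(M^\top p)_j \geq (s-\alpha)p(j) - \alpha(1-p(j)) = s\,p(j) - \alpha$ and symmetrically $(Mq)_i \leq s\,q(i) + \alpha$ for all $i,j$. The $\eps$-NE condition can then be used pointwise: since $(M^\top p)_j \leq \max_{j'}(M^\top p)_{j'}$ and $(Mq)_i \geq \min_{i'}(Mq)_{i'}$, for every pair $(i,j)$,
\begin{equation*}
  (M^\top p)_j - (Mq)_i \;\leq\; g(M,p,q) \;\leq\; 2\eps,
\end{equation*}
which, combined with the entrywise bounds, yields the pointwise comparison $p(j) - q(i) \leq 2(\alpha+\eps)/s$ for all $i,j \in [K]$.

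Next, I would turn this pointwise inequality into the two required lower bounds. Choosing $j$ at the maximum of $p$ and using $\max_{j'} p(j') \geq 1/K$ immediately gives $q(i) \geq 1/K - 2(\alpha+\eps)/s$ for every $i$, which is actually stronger than what the lemma claims. A direct lower bound on $p(j)$ is \emph{not} available this way, so I would proceed indirectly: summing the inequality $q(i) \geq \max_{j'} p(j') - 2(\alpha+\eps)/s$ over $i$ yields $\max_{j'} p(j') \leq 1/K + 2(\alpha+\eps)/s$, and then for any $j$,
\begin{equation*}
  p(j) \;=\; 1 - \sum_{j' \neq j} p(j') \;\geq\; 1 - (K-1)\max_{j'} p(j') \;\geq\; \frac{1}{K} - \frac{2(K-1)(\alpha+\eps)}{s}.
\end{equation*}
Taking the weaker of the two bounds establishes the stated conclusion for both $p(i)$ and $q(j)$.

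The main obstacle to anticipate is precisely the asymmetry between the minimizer and the maximizer. The $\eps$-NE condition $g(M,p,q) \leq 2\eps$ only controls $(M^\top p)_j - (Mq)_i$ in one direction, so the pointwise inequality $p(j) - q(i) \leq 2(\alpha+\eps)/s$ cannot be reversed: it directly lower-bounds $q$ in terms of $\max p$, but lower-bounding $p$ requires the complementation argument above combining $\sum p(j) = 1$ with the derived upper bound on $\max p$. This pigeonhole step is what introduces the extra factor $(K-1)$ that appears in the stated bound on $p$.
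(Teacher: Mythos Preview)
Your proof is correct and follows essentially the same approach as the paper: both derive the entrywise bounds $(M^\top p)_j \geq s\,p(j)-\alpha$ and $(Mq)_i \leq s\,q(i)+\alpha$, combine them with the $\eps$-NE gap to lower-bound $q(i)$ and upper-bound $\max_j p(j)$ by $1/K \pm 2(\alpha+\eps)/s$, and then use the pigeonhole step $p(j) \geq 1 - (K-1)\max_{j'} p(j')$ to obtain the asymmetric bound on $p$. The only cosmetic difference is that you package the gap bound as the single pointwise inequality $p(j)-q(i)\leq 2(\alpha+\eps)/s$ and sum over $i$ to bound $\max_j p(j)$, whereas the paper bounds $\max_j(M^\top p)_j$ and $\min_i(Mq)_i$ separately.
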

%
%
\begin{proof}[\textbf{Proof}]
  Let $(p, q)$ denote an $\eps$-NE of $M$. Then 
  \begin{equation}\label{eq:pqequibrim}
      \max_{j\in[K]} (M^\top p)_j - \min_{i \in [K]} (M q)_i \leq 2\eps \, . 
  \end{equation}
  Now for any $j \in [K]$, since $M$ is close to $s I_{K}$, 
  \begin{equation*}
      (M^\top p)_j \geq s\,p(j) - \alpha
  \end{equation*}
  thus, using the fact that $p$ is a probability vector,
  \begin{equation}\label{eq:pmass}
      \max_{j \in [K]} (M^\top p)_j \geq s \max_{j \in [K]} p(j) - \alpha \geq \frac{s}{K} - \alpha \, . 
  \end{equation}
  Similarly, 
  \begin{equation}\label{eq:qmass}
      \min_{i\in [K]} (M q)_i \leq s \min_{i \in [K]} q(i) + \alpha
  \end{equation}
  Combining the equations \eqref{eq:pqequibrim}, \eqref{eq:pmass} and \eqref{eq:qmass} above, we have 
  \begin{equation*}
      \frac{s}{K} - \alpha - \eps \leq  s \min_{i \in [K]} q(i) + \alpha + \eps \, ,
  \end{equation*}
  i.e., after rearranging, 
  \begin{equation*}
      \min_{i \in [K]} q(i) \geq \frac{1}{K} - \frac{2(\alpha + \eps)}{s} \, . 
  \end{equation*}
  Similarly
  \begin{equation*}
      \max_{j \in [K]} p(j) \leq \frac{1}{s} \max_{j \in [K]} \big(M^\top p)_j + \frac{\alpha}{s} \leq \min_{i \in [K]}q(i) + 2 \frac{\alpha + \eps}{s} \leq \frac{1}{K} + \frac{2(\alpha + \eps)}{s}
  \end{equation*}
  Therefore
  \begin{multline*}
      \min_{j \in [K]} p(k) \geq 1 - (K-1) \max_{j \in [K]} p(j)
      \geq 1 - \frac{K-1}{K} - (K-1) \frac{2(\alpha + \eps)}{s} \\
      = \frac{1}{K} - (K-1) \frac{2(\alpha + \eps)}{s} \, .
  \end{multline*}
  Completing the proof.
\end{proof}

\begin{corollary}
  \label{cor:open-fully-mixed-eps-ne}
    For any $\eps \leq 1 / (16K^2)$, for any game matrix $M$ such that
    \begin{equation*}
        \max_{i, j \in [K] } \Big|M_{i, j} - \frac{1}{2}I_K\Big| \leq \frac{1}{16K^2}
    \end{equation*}
    all $\eps$-NE $(p, q)$ of $M$ satisfy for all $i, j \in [K]$
    \begin{equation*}
        \min \big(p(i), q(j)\big) \geq \frac{1}{2K} \, .
    \end{equation*}
\end{corollary}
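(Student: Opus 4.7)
The plan is to invoke Lemma~\ref{lem:ball} with the specific choices $s = 1/2$ and $\alpha = 1/(16K^2)$, then check by direct arithmetic that the resulting lower bound simplifies to $1/(2K)$ whenever $\eps \leq 1/(16K^2)$.

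First, I would observe that the hypothesis of the corollary is precisely the hypothesis of Lemma~\ref{lem:ball} for these parameters: the matrix $\tfrac{1}{2} I_K$ has entries $\tfrac{1}{2}$ on the diagonal and $0$ elsewhere, so the entrywise condition $\max_{i,j} |M_{i,j} - \tfrac{1}{2}(I_K)_{i,j}| \leq 1/(16K^2)$ is exactly what the lemma requires with $s = 1/2$ and $\alpha = 1/(16K^2)$. Applying the lemma immediately yields, for every $\eps$-Nash equilibrium $(p, q)$ of $M$ and every $i, j \in [K]$,
\[
\min(p(i), q(j)) \geq \frac{1}{K} - \frac{2(\alpha + \eps)(K-1)}{s}.
\]

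The remaining step is a short numerical verification. Using $\eps \leq 1/(16K^2) = \alpha$, we get $\alpha + \eps \leq 1/(8K^2)$; dividing by $s = 1/2$ introduces an extra factor of $2$, so $2(\alpha + \eps)/s \leq 1/(2K^2)$. Multiplying by $K - 1 \leq K$ bounds the correction term by $1/(2K)$, and subtracting from $1/K$ gives the claimed lower bound $1/(2K)$.

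There is really no serious obstacle here: the corollary is a direct numerical specialisation of Lemma~\ref{lem:ball}, and the only thing worth a sanity check is that the constants line up cleanly, which they do since $s^{-1}$ and $\alpha + \eps$ each contribute a factor of $2$, and $(K-1)/K^2 \leq 1/K$ supplies the final $1/K$. If I had to flag a subtle point, it would just be confirming that the constant $1/(16K^2)$ in the radius of the ball was chosen precisely so that the corresponding bound does not degrade below $1/(2K)$; any larger radius would break the chain of inequalities.
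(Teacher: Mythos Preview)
Your proposal is correct and matches the paper's intended argument: the corollary is stated without proof as an immediate numerical specialization of Lemma~\ref{lem:ball} with $s=1/2$ and $\alpha=1/(16K^2)$, and your arithmetic verifying that $2(\alpha+\eps)(K-1)/s \leq 1/(2K)$ is exactly what is needed.
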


\begin{corollary}
  \label{lem:open-fully-mixed-exact-ne}
    For any game matrix $M$ such that
    \begin{equation*}
        \max_{i, j \in [K]} \Big|M_{i, j} - \frac{1}{2}I_K\Big| \leq \frac{1}{16K^2}
    \end{equation*}
    all exact Nash equilibria $(p, q)$ are fully supported.
\end{corollary}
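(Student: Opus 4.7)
The plan is to observe that Corollary~\ref{lem:open-fully-mixed-exact-ne} is the $\eps = 0$ specialization of the immediately preceding Corollary~\ref{cor:open-fully-mixed-eps-ne}. An exact Nash equilibrium is, by definition, a $0$-Nash equilibrium, and the quantifier $\eps \leq 1 / (16 K^2)$ in the approximate statement trivially includes $\eps = 0$. So the hypothesis of Corollary~\ref{cor:open-fully-mixed-eps-ne} is met verbatim for the matrix $M$ in question.

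Concretely, I would first record that the assumption $\max_{i,j \in [K]} | M_{i,j} - \tfrac{1}{2} I_K | \leq 1 / (16 K^2)$ matches the hypothesis of Corollary~\ref{cor:open-fully-mixed-eps-ne}. Then, taking any exact NE $(p, q)$ of $M$ and applying that corollary with $\eps = 0$, one obtains $\min(p(i), q(j)) \geq 1 / (2K)$ for every $i, j \in [K]$. Since $1/(2K) > 0$, both $p$ and $q$ have all coordinates strictly positive, i.e., $(p, q)$ is fully supported.

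If one prefers a self-contained derivation from scratch, the alternative is to instantiate Lemma~\ref{lem:ball} directly with $s = 1/2$, $\alpha = 1/(16K^2)$, and $\eps = 0$, yielding
\begin{equation*}
  \min(p(i), q(j)) \geq \frac{1}{K} - \frac{2 \alpha (K-1)}{s} = \frac{1}{K} - \frac{K-1}{4 K^2} \geq \frac{3}{4K} > 0,
\end{equation*}
which again gives full support.

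There is no genuine obstacle here: the statement is an immediate corollary of results already proved, and the only care needed is to check that $\eps = 0$ is an admissible value in the quantitative lower bound, which it plainly is.
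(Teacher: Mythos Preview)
Your proposal is correct and matches the paper's approach: the paper states this as an immediate corollary with no separate proof, and your observation that it is the $\eps = 0$ instance of Corollary~\ref{cor:open-fully-mixed-eps-ne} (or equivalently a direct instantiation of Lemma~\ref{lem:ball}) is exactly the intended justification.
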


\end{document}